\documentclass{llncs}
\pdfoutput=1
\usepackage{amsmath,amssymb}
\usepackage{graphicx}
\usepackage{color}

\newcommand{\dimdm}{\ensuremath{\mathrm{dim}_{\mathrm{DM}}}}
\newcommand{\RR}{\ensuremath{\mathbb{R}}}
\newcounter{obs}
\setcounter{obs}{0}

\usepackage{color,url}

\newcommand{\dan}[1]{\textcolor{blue}{#1}}

\title{Dushnik-Miller dimension of $d$-dimensional tilings with boxes}
\author{Mathew C. Francis~\inst{1} and Daniel Gon\c{c}alves~\inst{2}}
\institute{Indian Statistical Institute, Chennai Centre, India \and
  LIRMM, Univ. de Montpellier \& CNRS, France.}

\begin{document}
\maketitle

\begin{abstract}
Planar graphs are the graphs with Dushnik-Miller dimension at most
three (W. Schnyder, Planar graphs and poset dimension, Order 5,
323-343, 1989).  Consider the intersection graph of interior disjoint
axis parallel rectangles in the plane. It is known that if at most
three rectangles intersect on a point, then this intersection graph is
planar, that is it has Dushnik-Miller dimension at most three.  This
paper aims at generalizing this from the plane to $\mathbb{R}^d$ by
considering tilings of $\mathbb{R}^d$ with axis parallel boxes, where
at most $d+1$ boxes intersect on a point. Such tilings induce
simplicial complexes and we will show that those simplicial complexes
have Dushnik-Miller dimension at most $d+1$.
\end{abstract}


\section{Introduction}

One can easily see that the intersection graph induced by a set of
interior disjoint axis parallel rectangles, with at most three
rectangles intersecting on a point, is a planar graph. C. Thomassen
characterized those graphs~\cite{Tho}
(See also~\cite{Fusy} for a combinatorial study of these
representations). H. Zhang showed how such a representation (when it tiles a
rectangle) also induces a Schnyder wood of the induced planar
graph~\cite{Z10}. Schnyder woods was the key structure that allowed
W. Schnyder to prove that planar graphs are the graphs with
Dushnik-Miller dimension at most three~\cite{S89}. It is interesting
to note that most planar graphs have Dushnik-Miller dimension equal to
three. Indeed, a graph has Dushnik-Miller dimension at most two if and
only if it is the subgraph of a path.

The main result of this paper is that the simplicial complexes induced
by a wide family of tilings of $\mathbb{R}^d$ with axis parallel
boxes, have Dushnik-Miller dimension at most $d+1$. As most of these
simplicial complexes have a $d$-face, there Dushnik-Miller dimension
is exactly $d+1$. Definitions are provided in the following. Note that
both, the objects (graphs or simplicial complexes) with Dushnik-Miller
dimension greater than three~\cite{FK,GI17,Oss}, and the systems of
interior disjoint axis parallel boxes in
$\mathbb{R}^d$~\cite{EuroCG,FF,AA10}, are difficult to handle but are
raising interest in the community. The proof of our result generalizes
H. Zhang's idea for constructing Schnyder woods in tilings of
$\mathbb{R}^2$, and relies on some properties of tilings that are of
independent interest. After providing a few basic definitions in
Section~\ref{sec:d-boxes}, we present these properties in
Section~\ref{sec:d-tiling} and Section~\ref{sec:sep}. We then prove
our main result in Section~\ref{sec:dimdm}. We then conclude with some
open problems.

\section{$d$-boxes}\label{sec:d-boxes}

A \emph{$d$-box} (or simply a \emph{box}) is the Cartesian product of
$d$ closed intervals. The intervals of a $d$-box $B$ are denoted
$B_1,\ldots,B_d$, that is $B=B_1\times\cdots\times B_d$.  Similarly
the coordinates of a point $x\in \mathbb{R}^d$ are denoted
$x=(x_1,\ldots,x_d)$.  The endpoints of an interval $B_i$ are denoted
$B_i^-$ and $B_i^+$ in such a way that $B_i = [B_i^-,B_i^+]$. Such an
interval is \emph{degenerate} if its endpoints coincide, that is if
$B_i^- = B_i^+$.  The {\em dimension} $\dim(B)$ of a $d$-box $B$ is
the number of non-degenerate intervals among $\{B_1,\ldots,B_d\}$. For
example, a 0-dimensional $d$-box is a point in $\mathbb{R}^d$.


The \emph{interior} of a $d'$-dimensional box $B$ is the open
$d'$-dimensional box defined by the points $p=(p_1,\ldots,p_d)$ such
that $B_i^- < p_i < B_i^+$ if $B_i^-\neq B_i^+$ or such that
$p_i=B_i^- = B_i^+$ otherwise.  The points of $B$ that are not
interior form the \emph{border} of $B$. The border of $B$ is the union
of its sides. A \emph{side} of $B$ is a $(d'-1)$-dimensional box
$S(B,i,*) = [B_1^-,B_1^+]\times\cdots\times
[B_i^*,B_i^*]\times\cdots\times [B_d^-,B_d^+]$, for $*\in\{-,+\}$, and
some non-degenerate dimension $i$ of $B$ (i.e. such that $B_i^-\neq
B_i^+$). Clearly, a box with dimension $d'$ has $2d'$ distinct
sides.  A \emph{corner} of $B$ is a point $(x_1,x_2,\ldots,x_d)$ where
each $x_i$ is an endpoint of $B_i$, that is either $x_i=B_i^-$ or
$x_i=B_i^+$. Clearly, a box with dimension $d'$ has $2^{d'}$
corners.

The intersection $A\cap B$ of two boxes is the box $(A_1\cap
B_1)\times\cdots\times (A_d\cap B_d)$. Two $d$-dimensional boxes
are \emph{interior disjoint} if their interiors do not intersect, or
equivalently if $\dim(A\cap B)<d$.

\begin{definition}
A \emph{$d$-tiling} $\mathcal{T}$ is a collection of interior disjoint
$d$-dimensional boxes contained in $[-1,+1]^d$ that tile
$[-1,+1]^d$ (i.e. every point of $[-1,+1]^d$ belongs to at least one
box of $\mathcal{T}$).
\end{definition}

Let us now define $\mathcal{T}_{ext}=\{T(i,*)\ :\ 1\le i\le
d,\ *\in\{-,+\} \}$, a set of $2d$ $d$-dimensional boxes that tile
$\mathbb{R}^d \setminus [-1,+1]^d$.
\begin{center}
\begin{tabular}{ccccccc}
$T(i,-)_j $ & $\ =\ $ & $T(i,+)_j$ & $\ =\ $ & $[-1,+1]$ & \mbox{ if } & $\ j<i$\\
& & $T(i,-)_j$ & $\ =\ $ & $[-\infty,-1]$ & \mbox{ if } & $\ j=i$\\
& & $T(i,+)_j$ & $\ =\ $ & $[+1,+\infty]$ & \mbox{ if } & $\ j=i$\\
$T(i,-)_j$ & $\ =\ $ & $T(i,+)_j$ & $\ =\ $ & $[-\infty,+\infty]$ & \mbox{ if } & $\ j>i$
\end{tabular}
\end{center}
In particular,

\begin{tabular}{ccccccccc}
$T(1,-)$ & $=$ & $[-\infty,-1]$ & $\times$ & $[-\infty,+\infty]$ & $\times$ & $[-\infty,+\infty]$ & $\times$ & $\cdots$\\

$T(1,+)$ & $=$ & $[+1,+\infty]$ & $\times$ & $[-\infty,+\infty]$ & $\times$ & $[-\infty,+\infty]$ & $\times$ & $\cdots$\\

$T(2,-)$ & $=$ & $[-1,+1]$ & $\times$ & $[-\infty,-1]$ & $\times$ & $[-\infty,+\infty]$ & $\times$ & $\cdots$\\

$T(2,+)$ & $=$ & $[-1,+1]$ & $\times$ & $[+1,+\infty]$ & $\times$  & $[-\infty,+\infty]$ & $\times$ & $\cdots$\\

$\vdots$ & & & & & & & & \\
\end{tabular}

Note that given a $d$-tiling $\mathcal{T}$, the set
$\mathcal{T}\cup\mathcal{T}_{ext}$ is a set of interior disjoint
$d$-dimensional boxes that tile $\mathbb{R}^d$. The set
$\mathcal{T}_{ext}$ is needed to define \emph{proper} $d$-tilings in
Section~\ref{sec:d-tiling}, and it is used in the following technical
lemma.

Given two intersecting boxes $A$ and $B$, if $A_i\cap B_i$ is
degenerate, then these two boxes are said to \emph{touch} in
dimension $i$.

\begin{lemma}\label{lem:touch}
In a $d$-tiling $\mathcal{T}$, for any $A \in \mathcal{T}$ and any
point $p\in A$. If $p_i=A_i^-$ (resp. $p_i=A_i^+$), there is a box
$B\in \mathcal{T}\cup\mathcal{T}_{ext}$ such that $p\in A\cap B$, and
such that $A$ and $B$ touch only in dimension $i$ (i.e. such that
$A_j\cap B_j$ is degenerate only for $j=i$). In particular,
$\dim(A\cap B)= d-1$.
\end{lemma}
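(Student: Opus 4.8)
Proof plan:

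By symmetry it suffices to treat the case $p_i=A_i^-$ (the case $p_i=A_i^+$ is identical after reflecting the $i$-th coordinate), and after relabelling I may assume $i=1$. The plan is to locate $B$ among the boxes lying immediately ``below'' $A$ in dimension $i$, and then let interior-disjointness pin down the touching. To that end I would introduce the region directly below $A$ in dimension $i$, trimmed to the extent of $A$ in the remaining dimensions,
\[
P=\{x\in\RR^d : x_i\le A_i^-,\ A_j^-\le x_j\le A_j^+ \text{ for all } j\ne i\},
\]
whose interior is $\mathrm{int}(P)=\{x : x_i<A_i^-,\ A_j^-<x_j<A_j^+\ (j\ne i)\}$. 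Since each $A_j$ is non-degenerate ($A$ is $d$-dimensional), every $p_j\in[A_j^-,A_j^+]$ is a limit of points of $(A_j^-,A_j^+)$; hence $p$ lies in the closure of $\mathrm{int}(P)$, and $\mathrm{int}(P)$ contains points arbitrarily close to $p$.

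Next I would produce the box $B$ by a limiting argument. Because $\mathcal{T}\cup\mathcal{T}_{ext}$ tiles $\RR^d$ by interior-disjoint boxes and only finitely many of them meet a bounded neighbourhood of $p$, the union of their borders is a finite union of hyperplane pieces, hence closed with dense complement, and every point off this union lies in the interior of exactly one box. Choosing points $w^{(n)}\in\mathrm{int}(P)$ that avoid all borders and satisfy $w^{(n)}\to p$, each $w^{(n)}$ lies in $\mathrm{int}(B^{(n)})$ for a unique box $B^{(n)}$; by the pigeonhole principle a single box $B$ occurs for infinitely many $n$. Passing to that subsequence gives $w^{(n)}\in\mathrm{int}(B)\cap\mathrm{int}(P)$ with $w^{(n)}\to p$, so $p\in B$, and I may fix one witness $w\in\mathrm{int}(B)\cap\mathrm{int}(P)$.

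It then remains to read off the conclusion from interior-disjointness. From $w\in\mathrm{int}(B)\cap\mathrm{int}(P)$: for every $j\ne i$ one has $w_j\in(A_j^-,A_j^+)\cap(B_j^-,B_j^+)$, a nonempty open interval, so $A_j\cap B_j$ is non-degenerate; and $w_i<A_i^-$ forces $B_i^-<A_i^-$, while $p\in B$ gives $A_i^-\in B_i$, i.e. $B_i^+\ge A_i^-$. Since $w\notin A$ we have $B\ne A$, so $A$ and $B$ are interior-disjoint, meaning $A_k\cap B_k$ is degenerate for some $k$. As this fails for every $k\ne i$, it must hold for $k=i$; since $B_i^-<A_i^-$ we have $A_i\cap B_i=[A_i^-,\min(A_i^+,B_i^+)]$, and degeneracy forces $\min(A_i^+,B_i^+)=A_i^-$, hence $B_i^+=A_i^-$ (using $A_i^+>A_i^-$). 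Thus $A$ and $B$ touch only in dimension $i$, so $\dim(A\cap B)=d-1$, with $p\in A\cap B$, as required.

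The genuinely nontrivial step is the existence step of the second paragraph: producing a single box $B$ that simultaneously contains $p$ and overlaps $A$ in a full $(d-1)$-dimensional way in all the other coordinates. Once such a $B$ is in hand, the pairwise interior-disjointness of $\mathcal{T}\cup\mathcal{T}_{ext}$ forces the touching to be exactly in dimension $i$ almost mechanically. The only place where care is needed is the limiting/genericity argument, which relies on the local finiteness of the tiling.
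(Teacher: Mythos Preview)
Your proof is correct and follows the same underlying idea as the paper: produce a point just outside $A$ in coordinate $i$ but strictly inside the projection of $A$ in all other coordinates, take any box $B$ of the tiling containing it, and then let interior-disjointness force the touching to be exactly in dimension $i$.

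The only difference is in how the auxiliary point is chosen and how $p\in B$ is secured. The paper picks a single explicit point $q$ obtained by perturbing each coordinate of $p$ by an $\epsilon$ smaller than the minimum gap among all corner coordinates of boxes in $\mathcal{T}\cup\mathcal{T}_{ext}$; this choice of $\epsilon$ guarantees directly that any box containing $q$ also contains $p$. You instead take a sequence $w^{(n)}\to p$ in $\mathrm{int}(P)$, use finiteness and pigeonhole to extract a single recurring box $B$, and conclude $p\in B$ by closedness. Both arguments rely on (and correctly invoke) the finiteness of the tiling; the paper's version is slightly more explicit and avoids limits, while yours is a clean topological variant of the same step.
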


\begin{proof}
Define $Z\subseteq\mathbb{R}$ to be the set of all distinct numbers that
appear as a coordinate of some corner of some box in $\mathcal{T}\cup
\mathcal{T}_{ext}$, i.e., $$Z=\bigcup_{B\in\mathcal{T}\cup
\mathcal{T}_{ext}}\bigcup_{j\in\{1,\ldots,d\}} \{B^-_j,B^+_j\}$$
Choose a real number $\epsilon>0$ such that $\epsilon<\min\{|a-b|\colon a,b\in
Z, a\neq b\}$.
We now choose a point $q\in\mathbb{R}^d$ such that for each $j\in\{1,\ldots,
d\}\setminus\{i\}$, $q_j\in A_j$ and $q_i\notin A_i$. We also make sure that
$q$ is so close to $p$ that any box $B$ that contains $q$ also contains
$p$. This can be achieved by choosing $q$ as follows.
$$q_j=\left\{
\begin{array}{ll}
p_j-\epsilon&\mbox{ if }j=i\mbox{ and }p_j=A^-_j\\
p_j+\epsilon&\mbox{ if }j=i\mbox{ and }p_j=A^+_j\\
p_j&\mbox{ if }A^-_j<p_j<A^+_j\\
p_j+\epsilon&\mbox{ if }j\neq i\mbox{ and }p_j=A^-_j\\
p_j-\epsilon&\mbox{ if }j\neq i\mbox{ and }p_j=A^+_j
\end{array}\right.$$
Clearly, there is some box $B\in\mathcal{T}\cup\mathcal{T}_{ext}$ such
that $q\in B$. Suppose that $p\notin B$, i.e., there exists some
$j\in\{1, \ldots,d\}$ such that $p_j\notin B_j$. Then since $q_j\in
B_j$, we have either $p_j<B^-_j\leq q_j$ or $q_j\leq B^+_j<p_j$. From
the definition of $q$, we have that $p_j\in\{A^+_j,A^-_j\}$ (as
otherwise $q_j=p_j$) and that $|p_j-q_j|= \epsilon$. This means that
there exist distinct $a,b\in\{B^+_j,B^-_j,A^+_j, A^-_j\}$ such that
$|a-b|\dan{\le} \epsilon$, which contradicts our choice of
$\epsilon$. Therefore, we conclude that $p\in B$, and so $A$ and $B$ intersect.

Actually for each $j\in\{1,\ldots,d\}\setminus\{i\}$, by construction
$q_j\in A_j$ and $q_j \notin\{A^-_j,A^+_j\}$. As $\{p_j,q_j\} \subset
A_j\cap B_j$, we thus have that $A_j\cap B_j$ is not degenerate. As
$A$ and $B$ intersect on a box of dimension $d'<d$, it must be the
case that $A_i\cap B_i$ is degenerate.  This completes the
proof.\hfill\qed
\end{proof}

Let $\mathcal{H}^{(i)}_{x}$ denote the hyperplane defined by
$\left\{p\in \mathbb{R}^d\ :\ p_i=x \right\}$. Given a
$d'$-dimensional box $B$, let us denote $B|^{(i)}_x$ the intersection
between $B$ and $\mathcal{H}^{(i)}_{x}$.  Note that this intersection
is either empty, or the box $B$ itself (if $B_i=[x,x]$), or it is a
$(d'-1)$-dimensional box. Depending on the context, $B|^{(i)}_x$ is
considered as a box of $\mathbb{R}^d$, or of $\mathbb{R}^{d-1}$ if we
omit the $i$-th dimension.  The hyperplane $\mathcal{H}^{(i)}_{x}$ is
said to be {\em generic} w.r.t. a $d$-tiling $\mathcal{T}$, if for
every $B\in \mathcal{T}$, $x\neq B_i^-$ and $x\neq B_i^+$.  Given a
$d$-tiling $\mathcal{T}$, let the \emph{intersection} of $\mathcal{T}$
and $\mathcal{H}^{(i)}_{x}$ be the set $\mathcal{T}|^{(i)}_{x} =
\left\{ B|^{(i)}_x \ :\ B\in\mathcal{T} \text{ with } B^-_i \le x \le
B^+_i \right\}$. The following lemma indicates when
$\mathcal{T}|^{(i)}_{x}$ defines a $(d-1)$-tiling.

\begin{lemma}~\label{lem:slice-tiling}
  For every $d$-tiling $\mathcal{T}$, and every hyperplane
  $\mathcal{H}^{(i)}_{x}$ that is generic w.r.t. $\mathcal{T}$,
  $\mathcal{T}|^{(i)}_{x}$ is a $(d-1)$-tiling.
\end{lemma}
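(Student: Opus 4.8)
The plan is to check directly the three defining conditions of a $(d-1)$-tiling for the family $\mathcal{T}|^{(i)}_x$, viewed in $\mathbb{R}^{d-1}$ by omitting the $i$-th coordinate: that each member is a $(d-1)$-dimensional box contained in $[-1,+1]^{d-1}$, that the members are pairwise interior disjoint, and that they cover $[-1,+1]^{d-1}$. Genericity enters through one simple observation: for $B\in\mathcal{T}$ the condition $B_i^-\le x\le B_i^+$ is equivalent to $B_i^- < x < B_i^+$, since $x\notin\{B_i^-,B_i^+\}$; thus the boxes contributing a slice are exactly those whose $i$-th interval contains $x$ in its interior. I would also note at the outset that only $x\in(-1,+1)$ is relevant: as $\mathcal{T}$ tiles $[-1,+1]^d$, every face $p_i=\pm 1$ is met by some box of $\mathcal{T}$ having that value as an endpoint of its $i$-th interval, so genericity forces $x\neq\pm 1$, while for $x\notin[-1,+1]$ the slice is empty. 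So we may assume $-1<x<+1$.

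The first two conditions are straightforward. Each $B\in\mathcal{T}$ is $d$-dimensional, so all its intervals $B_j$ are non-degenerate; a contributing slice $B|^{(i)}_x$ keeps the intervals $B_j$ for $j\neq i$ unchanged and drops the $i$-th coordinate, hence it is a $(d-1)$-dimensional box, and since $B_j\subseteq[-1,+1]$ for every $j$ it lies in $[-1,+1]^{d-1}$. For the covering, given $y\in[-1,+1]^{d-1}$ I would lift it to the point $p\in[-1,+1]^d$ with $p_i=x$ and the remaining coordinates equal to those of $y$; since $\mathcal{T}$ tiles $[-1,+1]^d$, some $B\in\mathcal{T}$ contains $p$, and then $B_i^- < x < B_i^+$ by genericity, so $B$ contributes a slice and $y\in B|^{(i)}_x$.

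The main work, and the only place where I expect a genuine obstacle, is interior disjointness. For two distinct contributing boxes $A,B$ I would use that slicing commutes with intersection, $(A|^{(i)}_x)\cap(B|^{(i)}_x)=(A\cap B)|^{(i)}_x$, and argue on $A\cap B$, which satisfies $\dim(A\cap B)<d$ because $A$ and $B$ are interior disjoint. If $A\cap B$ is empty, or if some dimension $k\neq i$ has $A_k\cap B_k$ degenerate, then the slice inherits that degenerate $k$-th interval and so has dimension at most $d-2$, making the two slices interior disjoint in $\mathbb{R}^{d-1}$. The delicate case is when the only degenerate dimension of $A\cap B$ is $i$ itself, i.e. $A$ and $B$ touch only in dimension $i$: then $A_i\cap B_i$ is a single point $t$, which is necessarily a common endpoint of $A_i$ and $B_i$. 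Here genericity is essential, for the slice $(A\cap B)|^{(i)}_x$ is non-empty only if $x\in A_i\cap B_i=\{t\}$, which would force $x=t\in\{A_i^-,A_i^+\}$, contradicting genericity; hence the slice is empty and the two slices are again disjoint. Having covered all cases, the three conditions together establish that $\mathcal{T}|^{(i)}_x$ is a $(d-1)$-tiling.
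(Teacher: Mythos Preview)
Your proof is correct and follows essentially the same approach as the paper's: the only nontrivial point is interior disjointness of the slices, and both arguments reduce this to the observation that if two contributing boxes $A,B$ touched only in dimension $i$, then $x$ would have to be an endpoint of $A_i$ or $B_i$, contradicting genericity. Your write-up is somewhat more explicit (the lifting argument for covering, the remark that $x\in(-1,+1)$, the case split on where the degenerate coordinate of $A\cap B$ lies), but the substance is the same.
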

\begin{proof}
As the boxes in $\mathcal{T}$ are $d$-dimensional, it is clear that
$\mathcal{T}|^{(i)}_{x}$ is a set of $(d-1)$-dimensional $(d-1)$-boxes
(by omitting the $i$-th dimension). It is also clear that these
boxes span $[-1,+1]^{d-1}$. What remains to prove is that these boxes
are interior disjoint.

Towards a contradiction, suppose that $A'$ and
$B'\in\mathcal{T}|^{(i)}_{x}$ intersect on a $(d-1)$-dimensional
$(d-1)$-box $C'$. Thus if $A$ and $B$ are the corresponding original boxes in
$\mathcal{T}$, i.e. those such that $A'=A|^{(i)}_{x}$ and $B'=B|^{(i)}_{x}$,
then $C'_j=A_j\cap B_j$ is non-empty and non-degenerate for each dimension $j$
other than $i$. Moreover, $A_i\cap B_i$ is also non-empty as $x\in A_i\cap B_i$.
As $\mathcal{T}$ is a $d$-tiling, this means that $A_i\cap B_i$ is degenerate, or
in other words, $A$ and $B$ touch in dimension $i$. This implies that $x=A^-_i$
or $x=A^+_i$, a contradiction to the fact that $\mathcal{H}^{(i)}_{x}$ is
  generic w.r.t. $\mathcal{T}$.\hfill\qed
\end{proof}

For any $d$-tiling $\mathcal{T}$ and any hyperplane
$\mathcal{H}^{(i)}_{x}$ that is generic w.r.t. $\mathcal{T}$, let us define $\mathcal{T}|^{(i)-}_{x}$ and
$\mathcal{T}|^{(i)+}_{x}$ as the two parts obtained by cutting
$\mathcal{T}$ through $\mathcal{H}^{(i)}_{x}$. In order to obtain
$d$-tilings, we prolong the sides on $\mathcal{H}^{(i)}_{x}$
towards $\mathcal{H}^{(i)}_{+1}$, or towards
$\mathcal{H}^{(i)}_{-1}$. Formally, for any $B\in
\mathcal{T}$, let $B|^{(i)-}_{x}$ and $B|^{(i)+}_{x}$ be the boxes $B_1
\times \ldots \times \alpha^-(B_i)\times \ldots \times B_d$, and $B_1
\times \ldots \times \alpha^+(B_i)\times \ldots \times B_d$ where,
$$
  \alpha^-(B_i)=
      \begin{cases}
      \emptyset & \text{if $x\le B_i^-$}\\
      B_i & \text{if $B_i^+ < x$}\\
      [B_i^-,+1] & \text{otherwise}
    \end{cases}       
\ \ \ \ \text{and}\ \ \ \
  \alpha^+(B_i)=
      \begin{cases}
      \emptyset & \text{if $B_i^+\le x$}\\
      B_i & \text{if $x<B_i^-$}\\
      [-1,B_i^+] & \text{otherwise}
    \end{cases}       
$$

Now, let $\mathcal{T}|^{(i)-}_{x}$ (resp. $\mathcal{T}|^{(i)+}_{x}$) be
the set of non-empty boxes of the form $B|^{(i)-}_{x}$
(resp. $B|^{(i)+}_{x}$) for each $B\in \mathcal{T}$.
The following lemma is trivial.
\begin{lemma}\label{lem:cut-tiling}
  For every $d$-tiling $\mathcal{T}$, and every hyperplane
  $\mathcal{H}^{(i)}_{x}$ that is generic w.r.t. $\mathcal{T}$,
  both $\mathcal{T}|^{(i)-}_{x}$ and $\mathcal{T}|^{(i)+}_{x}$ are
  $d$-tilings.
\end{lemma}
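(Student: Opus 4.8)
The plan is to verify directly the two defining properties of a $d$-tiling for $\mathcal{T}|^{(i)-}_{x}$; the argument for $\mathcal{T}|^{(i)+}_{x}$ is symmetric, so I would treat only the minus case. First I would record the consequence of genericity: for every $B\in\mathcal{T}$ we have $x\neq B_i^-$ and $x\neq B_i^+$, so exactly one of the three cases occurs, namely $x<B_i^-$ (box entirely above the cut, $\alpha^-(B_i)=\emptyset$), $B_i^+<x$ (box entirely below, $\alpha^-(B_i)=B_i$), or $B_i^-<x<B_i^+$ (straddling box, $\alpha^-(B_i)=[B_i^-,+1]$). I would note in passing that in the two non-empty cases $\alpha^-(B_i)$ is still non-degenerate (in the straddling case because $B_i^-<x<+1$), and that the intervals in the dimensions $j\neq i$ are left untouched; hence every element of $\mathcal{T}|^{(i)-}_{x}$ is genuinely a $d$-dimensional box.

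For the covering property, I would take an arbitrary $p\in[-1,+1]^d$ and split on the position of $p_i$. If $p_i\leq x$, a box $B\in\mathcal{T}$ containing $p$ satisfies $B_i^-\leq p_i$, hence by genericity $B_i^-<x$, so $B$ is below or straddling; in both cases the upper endpoint of $\alpha^-(B_i)$ (either $B_i^+$ or $+1$) is at least $p_i$, so $p_i\in\alpha^-(B_i)$ and $p\in B|^{(i)-}_{x}$. If $p_i>x$, I would instead consider the point $p^x$ obtained from $p$ by resetting its $i$-th coordinate to $x$: any box $B$ containing $p^x$ has $B_i^-\leq x\leq B_i^+$, and by genericity these inequalities are strict, so $B$ straddles and $\alpha^-(B_i)=[B_i^-,+1]$. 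Since $p_j=p^x_j\in B_j$ for $j\neq i$ and $B_i^-<x<p_i\leq +1$, we get $p\in B|^{(i)-}_{x}$. Thus $\mathcal{T}|^{(i)-}_{x}$ covers $[-1,+1]^d$.

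For interior-disjointness I would argue by contradiction: suppose $A|^{(i)-}_{x}$ and $B|^{(i)-}_{x}$, with $A\neq B$ in $\mathcal{T}$, have intersecting interiors. Since the $j$-th intervals are unchanged for $j\neq i$, each $A_j\cap B_j$ is non-degenerate; as $A$ and $B$ are interior disjoint in $\mathcal{T}$, the only degenerate-or-empty intersection must occur in dimension $i$, so the open intervals $(A_i^-,A_i^+)$ and $(B_i^-,B_i^+)$ are disjoint, and w.l.o.g. $A_i^+\leq B_i^-$. If $A$ is the below case, then the upper endpoint of $\alpha^-(A_i)$ is $A_i^+\leq B_i^-$, the lower endpoint of $\alpha^-(B_i)$, so the two $i$-intervals meet in at most a point; if $A$ straddles, then $x<A_i^+\leq B_i^-$ forces $x<B_i^-$, i.e. $\alpha^-(B_i)=\emptyset$, so $B|^{(i)-}_{x}$ does not even exist. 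Either subcase contradicts the assumption, so the prolonged boxes are interior disjoint.

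The only real subtlety — the reason the statement is not vacuous — is that the operation is not a plain truncation at the hyperplane but a prolongation of the straddling boxes out to $\mathcal{H}^{(i)}_{+1}$. The main point to get right is therefore the interaction of this prolongation with disjointness, and the key observation I would lean on is that $\alpha^-$ preserves every lower endpoint while sending the upper endpoint of each straddling box to the common value $+1$; consequently two boxes that were separated in dimension $i$ either stay separated (when the lower one is a below box) or one of them disappears from $\mathcal{T}|^{(i)-}_{x}$ (when the lower one straddles). Genericity is used precisely to exclude the boundary cases $x=B_i^\pm$, which would otherwise blur the three-way classification and break the strict inequalities needed in the covering argument.
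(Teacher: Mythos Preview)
Your proof is correct. The paper actually declines to prove this lemma, remarking only that it is ``trivial''; your write-up supplies exactly the routine verification the authors omit, and the one non-obvious step you identify---that two boxes originally separated in dimension $i$ either remain separated after prolongation or else the higher one is discarded---is precisely the content that makes the lemma true.
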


\section{Proper $d$-tilings}\label{sec:d-tiling}

The boxes satisfy the Helly property. Indeed, given a set
$\mathcal{B}$ of pairwise intersecting boxes, the set $\cap_{B\in
  \mathcal{B}} B$ is a non-empty box. Graham-Pollak's
Theorem~\cite{Graham-Pollak,Tverberg} asserts that to partition the
edges of $K_n$ into complete bipartite graphs, one needs at least
$n-1$ such graphs. Using this theorem Zaks proved the following.

\begin{lemma}[\cite{Z85} Zaks 1985]
\label{lemma-d-1-proper}
Consider a set $\mathcal{B}$ of $d$-dimensional boxes that
intersect, that is $\bigcap_{B\in \mathcal{B}} B \neq \emptyset$. If
for every pair $A,B\in \mathcal{B}$, $\dim(A\cap B)=d-1$, then
$|\mathcal{B}|\leq d+1$.
\end{lemma}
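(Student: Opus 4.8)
The plan is to reduce the statement to the Graham--Pollak theorem mentioned just before the lemma. Write $n=|\mathcal{B}|$ and fix a point $p$ in the common intersection $\bigcap_{B\in\mathcal{B}}B$, which is non-empty by hypothesis. Since the boxes are $d$-dimensional, every interval $B_i=[B_i^-,B_i^+]$ is non-degenerate, and it contains $p_i$. For a pair $A,B\in\mathcal{B}$ the box $A\cap B$ is non-empty (it contains $p$) and has dimension $d-1$, so among its $d$ coordinate intervals $A_j\cap B_j$ exactly one is degenerate; I call this unique index the \emph{touching dimension} of the pair. This already shows that I am sorting the pairs of boxes according to $d$ labels, which is exactly the shape Graham--Pollak asks for.

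Next I would describe the touching relation in each fixed dimension $i$ as a complete bipartite graph. Classify each box $B\in\mathcal{B}$ by the position of $p_i$ inside $B_i$: put $B$ in $L_i$ if $B_i^+=p_i$ and in $R_i$ if $B_i^-=p_i$, leaving aside the boxes with $B_i^-<p_i<B_i^+$ (no box has $B_i^-=B_i^+=p_i$ since $B_i$ is non-degenerate, so $L_i$ and $R_i$ are disjoint). The key computation is that $A_i\cap B_i=[\max(A_i^-,B_i^-),\min(A_i^+,B_i^+)]$ is degenerate if and only if one of $A,B$ lies in $L_i$ and the other in $R_i$: the equality $\min(A_i^+,B_i^+)=p_i$ forces some box with right endpoint $p_i$, i.e.\ in $L_i$, while $\max(A_i^-,B_i^-)=p_i$ forces some box with left endpoint $p_i$, i.e.\ in $R_i$, and these two boxes must be distinct. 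Hence the pairs whose touching dimension is $i$ are exactly the edges of the complete bipartite graph $K_{L_i,R_i}$ drawn on the vertex set $\mathcal{B}$.

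Since every pair has exactly one touching dimension, the edge set of the complete graph $K_n$ on vertex set $\mathcal{B}$ is partitioned into the $d$ complete bipartite graphs $K_{L_1,R_1},\ldots,K_{L_d,R_d}$, discarding those that are empty because some side is empty. Graham--Pollak's theorem states that any partition of the edges of $K_n$ into complete bipartite graphs uses at least $n-1$ of them; therefore the number of non-empty graphs among these $d$ is at least $n-1$, giving $d\ge n-1$, that is $|\mathcal{B}|=n\le d+1$.

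The argument is short, so there is no deep obstacle; the one place demanding care is the endpoint bookkeeping in the middle paragraph, namely checking that degeneracy of $A_i\cap B_i$ forces precisely the split of $A,B$ into $L_i$ and $R_i$ rather than, say, two boxes of $L_i$ sharing a right endpoint. Once that equivalence is pinned down, the complete bipartite decomposition of $K_n$ and the appeal to Graham--Pollak are immediate.
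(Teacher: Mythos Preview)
Your proof is correct and follows essentially the same route as the paper's: fix a point in the common intersection, observe that each pair of boxes touches in a unique dimension, show that in each dimension the touching pairs form a complete bipartite graph via the left/right endpoint classification at that point, and apply Graham--Pollak. The only cosmetic difference is that the paper restricts the bipartition in dimension $t$ to the vertices actually incident to some edge labelled $t$, whereas you define $L_i,R_i$ over all of $\mathcal{B}$; both formulations yield the same complete bipartite subgraphs.
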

We include a proof of this result for completeness.
\begin{proof}
Consider a point $x\in \bigcap_{B\in \mathcal{B}} B$. For any two
boxes $A, B\in \mathcal{B}$, since $\dim(A\cap B)=d-1$, there
exists exactly one dimension in which they touch. If this dimension is
$t$, then $A_t\cap B_t=\{x_t\}$. As $A_t$ and $B_t$ are
non-degenerate, we either have $A^+_t = x_t = B^{-}_t$ or $B^{+}_t =
x_t = A^{-}_t$.

Let $K$ be the complete graph with vertex set $\mathcal{B}$.  Now
label each edge $AB$ of $K$ with $t$ if $A$ and $B$ touch in dimension
$t$. As every pair of boxes touch in exactely one dimension, this
labeling defines an edge partition of $K$ into $d$ subgraphs
$G_1,\ldots,G_d$. Let us now prove that every such graph $G_i$ is a
complete bipartite graph. The vertices $A$ with an incident edge in
$G_t$ divide into two categories, those such that $x_t=A^{-}_t$ and
those such that $x_t=A^{+}_t$. Any two boxes in the same category do
not touch in dimension $t$, so these categories induce two independent
sets in $G_t$. On the other hand, any two boxes in different
categories do touch in dimension $t$, so they are adjacent in $G_t$.

So, by Graham-Pollak's Theorem, $k \le d+1$.\hfill\qed
\end{proof}


A $d$-tiling $\mathcal{T}$ is \emph{proper} if every point $p\in
\mathbb{R}^d$ is contained in at most $d+1$ boxes of
$\mathcal{T}\cup\mathcal{T}_{ext}$. Figure~\ref{fig:framed_and_proper}
provides two configurations that are forbidden in proper $d$-tilings.

\begin{figure}[h]
  \centering \includegraphics[scale=0.25]{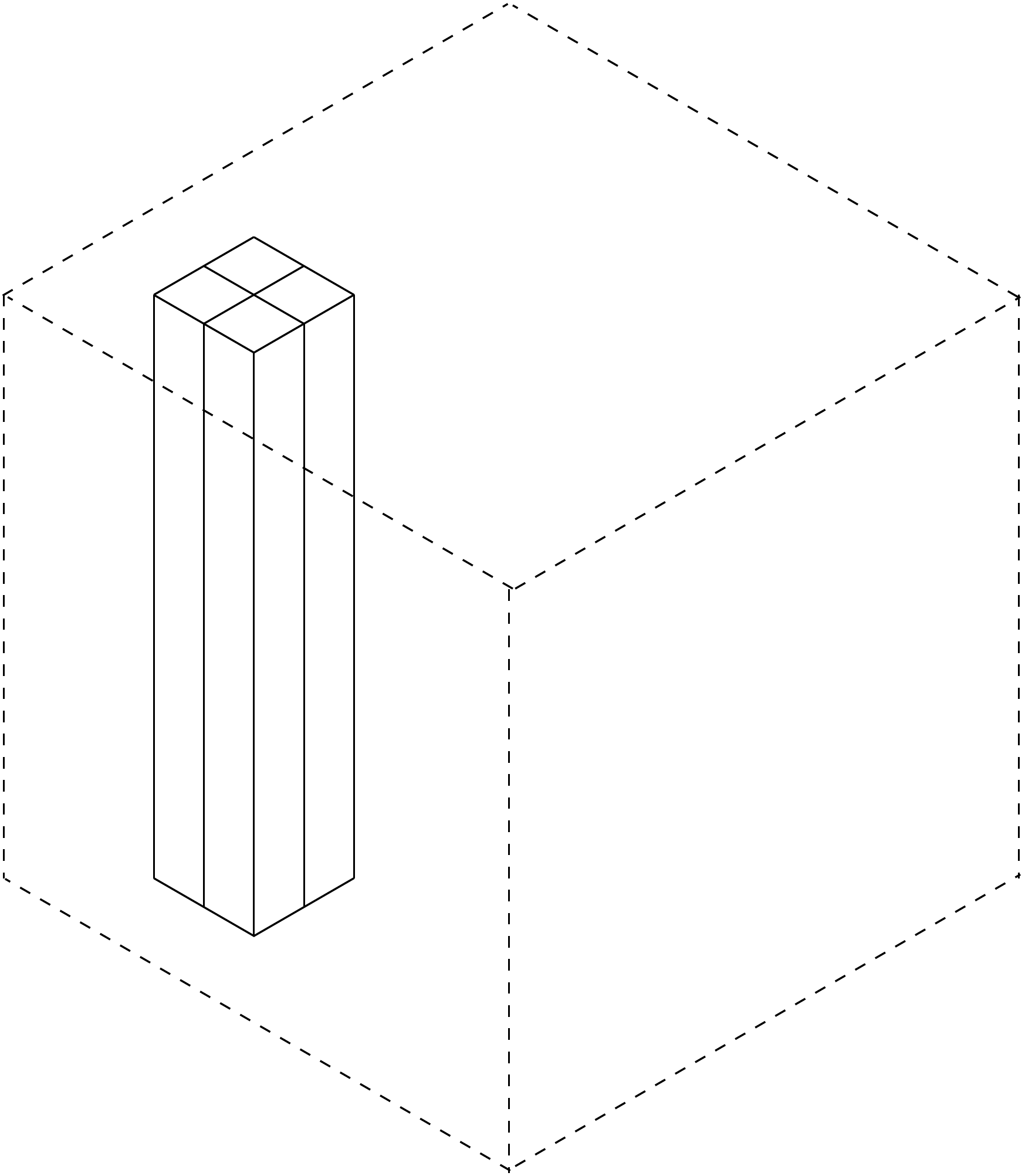}
  \hspace{3cm}
  \includegraphics[scale=0.25]{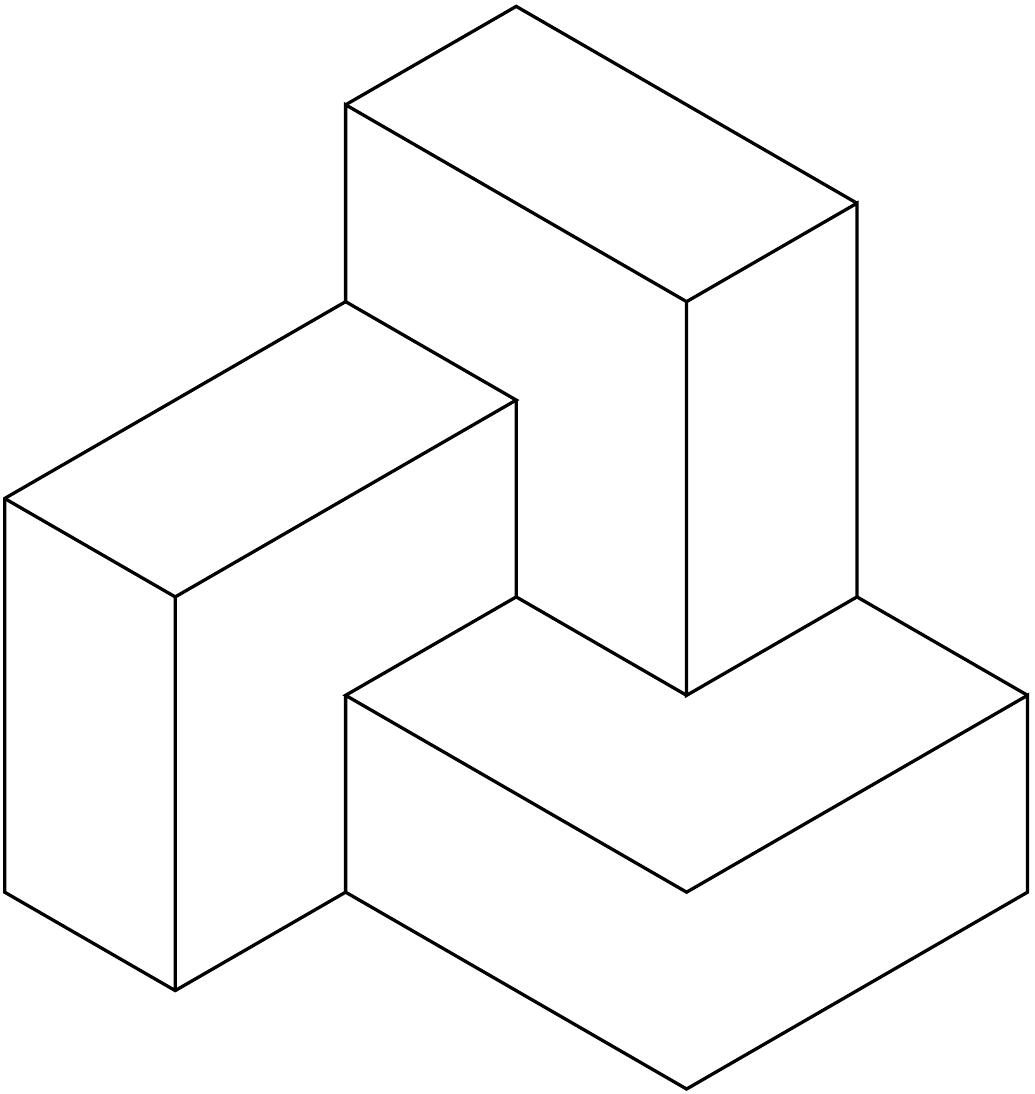}
  \caption{(left) A configuration that can appear in a tiling of
    $[-1,+1]^3$. Here every point of $[-1,+1]^3$ belongs to at most 4
    boxes of $\mathcal{T}$, but it is not proper when considering
    $\mathcal{T}_{ext}$ as two points would belong to 5 boxes.
    (right) A configuration of 3-boxes pairewise intersecting on a
    2-dimensional box. If this configuration was part of a 3-tiling
    there would be 5 boxes intersecting on a point, and thus it would
    not be proper.
  }\label{fig:framed_and_proper}
\end{figure}

\begin{lemma}~\label{lem:prop=d-1}
In a proper $d$-tiling $\mathcal{T}$, every pair of intersecting boxes
intersect on a $(d-1)$-dimensional box.
\end{lemma}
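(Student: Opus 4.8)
The plan is to argue by contradiction and by induction on $d$. Suppose $A,B\in\mathcal{T}$ are distinct intersecting boxes; since they are interior disjoint they touch in at least one dimension, and I want to show they touch in exactly one. Let $S$ be the set of dimensions in which they touch and suppose for contradiction that $s:=|S|\ge 2$, so that $\dim(A\cap B)=d-s\le d-2$. The engine of the argument is Lemma~\ref{lem:touch}: at a point where a coordinate of $A$ attains an endpoint value there is a neighbouring box touching $A$ in exactly that one dimension. I will use this to force many boxes through a single point and contradict properness (at most $d+1$ boxes per point).

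First consider the extreme case $s=d$. Then every $A_i\cap B_i$ is degenerate, so $A\cap B$ is a single point $p$, a corner of $A$. For each dimension $i$ we have $p_i\in\{A_i^-,A_i^+\}$, so Lemma~\ref{lem:touch} yields a box $C_i\in\mathcal{T}\cup\mathcal{T}_{ext}$ containing $p$ that touches $A$ only in dimension $i$. The $d+2$ boxes $A,B,C_1,\dots,C_d$ all contain $p$ and are pairwise distinct: the $C_i$ touch $A$ in distinct single dimensions, so they differ from one another and from $A$, and they differ from $B$ because $B$ touches $A$ in all $d\ge 2$ dimensions while each $C_i$ touches it in one. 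This exceeds $d+1$, contradicting properness.

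For the general case $2\le s<d$ a single point does not suffice: the same construction gives only $s+2\le d+1$ boxes, consistent with properness. So the plan is to slice away one non-touching dimension and invoke the induction hypothesis in dimension $d-1$. Pick $j\notin S$ and a generic value $x_j$ in the interior of $A_j\cap B_j$ (so $x_j\in(-1,1)$), and pass to $\mathcal{T}|^{(j)}_{x_j}$. By Lemma~\ref{lem:slice-tiling} this is a $(d-1)$-tiling, and the images of $A$ and $B$ remain distinct boxes that still touch in every dimension of $S$, hence in $s\ge 2$ dimensions --- contradicting the lemma in dimension $d-1$, provided the slice is proper.

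The hard part, which I would isolate as a separate claim, is precisely that a generic slice of a proper tiling is again proper. This is not a formal consequence of the definition: at a point $p'$ of the slice the box count equals the number of boxes of $\mathcal{T}\cup\mathcal{T}_{ext}$ through the lift $\hat p$ (with $\hat p_j=x_j$), and properness of $\mathcal{T}$ only bounds this by $d+1$, whereas properness of a $(d-1)$-tiling demands at most $d$ --- an apparent off-by-one. To close the gap I would use that $x_j$ is generic, so every box through $\hat p$ has $\hat p_j$ in its interior in dimension $j$. Assuming $d+1$ such boxes, push $\hat p$ along dimension $j$ up to the first box-top $r=\min_m (B_m)_j^+$, reaching a point $p^*$ still contained in all $d+1$ boxes; the box sitting immediately above the minimizing box $B_{m^*}$ is, by interior-disjointness (equivalently, of the projections omitting dimension $j$), distinct from all $d+1$, and a min-gap argument like the one in Lemma~\ref{lem:touch} shows it also contains $p^*$. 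This produces $d+2$ boxes through $p^*$, contradicting properness of $\mathcal{T}$ and establishing that the slice is proper. With this claim the induction closes, the base $d=1$ being immediate. I expect the verification that the push yields a genuinely new box, together with the bookkeeping matching the external boxes of the slice with those of $\mathcal{T}_{ext}$, to be the most delicate points.
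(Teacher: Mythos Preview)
Your proposal is correct and close in spirit to the paper's argument: both handle the extreme case $s=d$ by applying Lemma~\ref{lem:touch} once per dimension to produce $d+2$ boxes through the single point $A\cap B$, and both reduce the case $s<d$ by slicing along a non-touching coordinate. The difference is in how the induction is packaged. The paper proves a stronger auxiliary statement~$(*)$ valid for \emph{arbitrary} (not necessarily proper) $d$-tilings --- namely that two boxes touching in at least two dimensions force a point lying in at least $d+2$ boxes --- and inducts on $d-s$. With this formulation the slice $\mathcal{T}|^{(j)}_x$ only needs to be a tiling (Lemma~\ref{lem:slice-tiling}), not a proper one: the induction hypothesis already yields $d+1$ boxes through a point of the slice, and one further call to Lemma~\ref{lem:touch} at the lifted point supplies the $(d{+}2)$nd box. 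You instead induct on $d$ on the lemma's own statement, which obliges you to prove that a generic slice of a proper tiling is again proper before you can invoke the hypothesis. Your push-to-the-boundary argument for this claim is sound (and is essentially the same Lemma~\ref{lem:touch} manoeuvre the paper deploys in its inductive step), but the paper's packaging sidesteps the issue entirely: in the paper's logical order the fact that slices inherit properness appears only later, as Lemma~\ref{lem:slice-tiling-proper}, derived \emph{from} the present lemma via Theorem~\ref{thm:contactsystem} rather than used as an ingredient in its proof.
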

\begin{proof}
We shall prove the following stronger statement:

($*$) In a $d$-tiling $\mathcal{T}$, if two boxes $A, B \in
\mathcal{T} \cup \mathcal{T}_{ext}$ touch in more than one dimension,
then there exists a point in $A \cap B$ that is contained in at least
$d+2$ boxes of $\mathcal{T}\cup\mathcal{T}_{ext}$.

For the sake of contradiction, consider two boxes $A$ and $B$ such
that $\dim(A\cap B)=d-s$, with $s\ge 2$. Without loss of generality,
we assume that these boxes touch in dimension $i$ for $1\leq i\leq
s$. Furthermore, we assume that $A_i^+ = B_i^-$ and let us call $p_i$
this value, for $1\leq i\leq s$.

Clearly, $A$ and $B$ do not belong both to $\mathcal{T}_{ext}$. Furthermore,
as any box of $\mathcal{T}$ is contained in $[-1,+1]^d$,
it can touch a box of $\mathcal{T}_{ext}$ in at most one
dimension. Therefore, we have that $A$ and $B\in \mathcal{T}$.

We shall prove ($*$) by induction on $d-s$.  As the base case, we
shall show that it is true when $d-s=0$.  We claim that the point
$p=(p_1,\ldots,p_d)$ is contained in at least $d+2$ boxes.
By Lemma~\ref{lem:touch}, there is a box $H^{(i)}\neq A$ in
$\mathcal{T}\cup\mathcal{T}_{ext}$ such that $H^{(i)}$ touches $A$
only in dimension $i$ and contains the point $p$.  As $H^{(i)}$
touches $A$ only in dimension $i$, all these boxes are distinct.
Furthermore, as $\dim(A\cap H^{(i)}) = d-1$, each $H^{(i)}$ is
different from $B$. So, together with $A$ and $B$ they form a
collection of $d+2$ boxes that contain the point $p$.

We consider now the case $d-s>0$. As $A_d\cap B_d$ is non degenerate,
let us assume without loss of generality that $B_d^- < A_d^+ \le
B_d^+$. For an arbitrarily small $\epsilon >0$, we have that no box of
$\mathcal{T}$ has $x=A_d^+ -\epsilon$ as an endpoint of its $d^{st}$
interval.

By Lemma~\ref{lem:slice-tiling}, $\mathcal{T}|^{(d)}_{x}$ is a
$(d-1)$-tiling. Note that this tiling contains $A$ and $B$, and that
those still touch in $s\ge 2$ dimensions. Therefore, by the induction
hypothesis, there exists a point $p'=(p'_1,\ldots,p'_{d-1})$ that is
contained in at least $d+1$ of these $(d-1)$-boxes and also such that
$p'\in A\cap B$. Let these $(d-1)$-boxes be
$A,B,H^{(1)},\ldots,H^{(d-1)}$. Coming back to $\mathbb{R}^d$, the
point $p=(p'_1,\ldots,p'_{d-1}, A_d^+)$ is contained in $A\cap B\cap
H^{(1)}\cap\cdots\cap H^{(d-1)}$. By Lemma~\ref{lem:touch}, there
exists a box $F$ that contains the point $p$ and touches $A$ in
dimension $d$. This means that $F_d^- = A_d^+ > x$,
therefore $F\not\in\{A,B,H^{(1)},\ldots,H^{(d-1)}\}$. The point $p$ is
thus contained in $d+2$ distinct boxes. This concludes the proof
of the lemma.\hfill\qed
\end{proof}

Lemma~\ref{lemma-d-1-proper} and Lemma~\ref{lem:prop=d-1} imply the
following.
\begin{theorem}\label{thm:contactsystem}
A $d$-tiling is proper if and only if for any two intersecting boxes
$A,B$ we have $\dim(A\cap B)=d-1$.
\end{theorem}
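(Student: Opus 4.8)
The plan is to prove the two implications separately, each reducing to one of the two preceding lemmas. The forward implication is essentially immediate: if $\mathcal{T}$ is proper, then Lemma~\ref{lem:prop=d-1} states precisely that every pair of intersecting boxes meets in a $(d-1)$-dimensional box. (Recall that, since the boxes of $\mathcal{T}$ are interior disjoint, any two intersecting boxes already satisfy $\dim(A\cap B)\le d-1$; the content of this direction is thus the reverse inequality, which the lemma supplies directly.)

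For the converse, I would assume that every pair of intersecting boxes of $\mathcal{T}$ meets in a $(d-1)$-dimensional box and deduce that $\mathcal{T}$ is proper. Fix an arbitrary point $p\in\mathbb{R}^d$ and let $\mathcal{B}_p$ be the set of boxes of $\mathcal{T}\cup\mathcal{T}_{ext}$ that contain $p$. Any two boxes of $\mathcal{B}_p$ intersect, since they share $p$, and $p\in\bigcap_{B\in\mathcal{B}_p}B$, so to bound $|\mathcal{B}_p|$ by $d+1$ through Zaks' Lemma~\ref{lemma-d-1-proper} it suffices to verify that every pair $A,B\in\mathcal{B}_p$ satisfies $\dim(A\cap B)=d-1$. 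As $\mathcal{T}\cup\mathcal{T}_{ext}$ is a collection of interior disjoint boxes, we automatically have $\dim(A\cap B)\le d-1$, so only the matching lower bound needs to be established.

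The crux, and the step I expect to require the most care, is establishing $\dim(A\cap B)=d-1$ for those pairs of $\mathcal{B}_p$ that are \emph{not} both in $\mathcal{T}$, since the hypothesis constrains only pairs drawn from $\mathcal{T}$. For such pairs I would argue directly from the explicit description of $\mathcal{T}_{ext}$. If $A\in\mathcal{T}$ and $B=T(i,*)\in\mathcal{T}_{ext}$ intersect, then, because $A\subseteq[-1,+1]^d$ while the $i$-th interval of $T(i,*)$ is $[-\infty,-1]$ or $[+1,+\infty]$, the two boxes can touch only in dimension $i$; every other coordinate interval of $A$ survives non-degenerately in the intersection, whence $\dim(A\cap B)=d-1$. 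A similar inspection of two intersecting boxes $T(i,*),T(i',*')$ of $\mathcal{T}_{ext}$, where necessarily $i\neq i'$ since $T(i,-)$ and $T(i,+)$ are disjoint, shows that they too touch in exactly one dimension and hence meet in a $(d-1)$-box. This reasoning already appears implicitly in the proof of Lemma~\ref{lem:prop=d-1}, which observed that a box of $\mathcal{T}$ can touch a box of $\mathcal{T}_{ext}$ in at most one dimension.

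Combining the three cases (both boxes in $\mathcal{T}$, one box in $\mathcal{T}_{ext}$, both in $\mathcal{T}_{ext}$), every pair in $\mathcal{B}_p$ meets in a $(d-1)$-dimensional box, so Lemma~\ref{lemma-d-1-proper} yields $|\mathcal{B}_p|\le d+1$. Since $p$ was arbitrary, $\mathcal{T}$ is proper, which completes the converse and hence the theorem.
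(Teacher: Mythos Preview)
Your proof is correct and follows the same route as the paper, which simply records that the theorem is a consequence of Lemma~\ref{lemma-d-1-proper} and Lemma~\ref{lem:prop=d-1}. The only difference is that you spell out the verification that pairs involving boxes of $\mathcal{T}_{ext}$ also meet in a $(d-1)$-dimensional box, a detail the paper leaves implicit (relying on the observation, already made inside the proof of Lemma~\ref{lem:prop=d-1}, that a box of $\mathcal{T}$ and a box of $\mathcal{T}_{ext}$ can touch in at most one dimension, and similarly for two boxes of $\mathcal{T}_{ext}$).
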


Lemma~\ref{lem:prop=d-1} also allows us to prove the following
improvement of Lemma~\ref{lem:slice-tiling}.
\begin{lemma}~\label{lem:slice-tiling-proper}
  For every proper $d$-tiling $\mathcal{T}$, and every hyperplane
  $\mathcal{H}^{(i)}_{x}$ that is generic w.r.t. $\mathcal{T}$,
  $\mathcal{T}|^{(i)}_{x}$ is a proper $(d-1)$-tiling.
\end{lemma}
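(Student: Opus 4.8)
The plan is to combine the already-established Lemma~\ref{lem:slice-tiling} (which gives that $\mathcal{T}|^{(i)}_{x}$ is a $(d-1)$-tiling) with Theorem~\ref{thm:contactsystem}, the characterization of properness in terms of pairwise intersection dimension. By that theorem, to show $\mathcal{T}|^{(i)}_{x}$ is proper it suffices to show that any two intersecting boxes in $\mathcal{T}|^{(i)}_{x}$ intersect in a $(d-2)$-dimensional box (since we are now working in dimension $d-1$). So the whole proof reduces to a statement about intersection dimensions of slices, which should be inherited from the corresponding statement about the original boxes.

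Let me set up the main argument. Take two intersecting boxes $A', B' \in \mathcal{T}|^{(i)}_{x}$, arising as slices $A' = A|^{(i)}_x$ and $B' = B|^{(i)}_x$ of boxes $A, B \in \mathcal{T}$ with $A_i^- \le x \le A_i^+$ and $B_i^- \le x \le B_i^+$. Since $A'$ and $B'$ intersect and share the coordinate $x$ in dimension $i$, the original boxes $A$ and $B$ must themselves intersect (their intersection contains any common point of the slices). Because $\mathcal{T}$ is proper, Theorem~\ref{thm:contactsystem} gives $\dim(A \cap B) = d-1$, meaning $A$ and $B$ touch in exactly one dimension, say dimension $t$, and $A_j \cap B_j$ is non-degenerate for all $j \neq t$. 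I would then argue that $t \neq i$: since $\mathcal{H}^{(i)}_x$ is generic, $x$ is not an endpoint of $A_i$ or $B_i$, so $x$ lies strictly inside both $A_i$ and $B_i$, whence $A_i \cap B_i$ contains a neighborhood of $x$ and is non-degenerate, so the unique touching dimension $t$ cannot be $i$.

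Given $t \neq i$, I would compute $\dim(A' \cap B')$ directly. For each $j \neq i$, the $j$-th interval of $A' \cap B'$ is exactly $A_j \cap B_j$, which is degenerate precisely when $j = t$ and non-degenerate for every other $j \neq i$. Omitting the $i$-th dimension, $A' \cap B'$ is therefore a $(d-1)$-box that is degenerate in exactly one dimension (namely $t$) and non-degenerate in the remaining $d-2$ dimensions, so $\dim(A' \cap B') = d-2$. Applying Theorem~\ref{thm:contactsystem} in dimension $d-1$ to the $(d-1)$-tiling $\mathcal{T}|^{(i)}_{x}$ then yields that $\mathcal{T}|^{(i)}_{x}$ is proper.

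**The main obstacle**, and the only place requiring genuine care, is the step showing $t \neq i$; everything else is bookkeeping of intersection intervals. The cleanest way to handle it is to lean entirely on genericity: since $x \neq A_i^-, A_i^+$ and $x \neq B_i^-, B_i^+$, and since $x \in A_i \cap B_i$, the interval $A_i \cap B_i$ is a genuine interval containing $x$ in its relative interior and hence is non-degenerate, so the dimension in which $A$ and $B$ touch must be some $t \neq i$. One subtlety worth verifying is that $A \neq B$ as original boxes whenever $A' \neq B'$ as slices, so that Theorem~\ref{thm:contactsystem} genuinely applies to a pair of distinct intersecting boxes; this is immediate since distinct slices come from distinct boxes. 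With this in hand the argument is complete.
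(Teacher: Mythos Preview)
Your proposal is correct and follows essentially the same approach as the paper: invoke Lemma~\ref{lem:slice-tiling} for the tiling property, use Theorem~\ref{thm:contactsystem} on the original boxes to get a unique touching dimension $t$, rule out $t=i$ via genericity, and conclude with Theorem~\ref{thm:contactsystem} in dimension $d-1$. The only difference is that you spell out the interval-by-interval computation of $\dim(A'\cap B')$ and the $A\neq B$ check more explicitly than the paper does.
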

\begin{proof}
By Lemma~\ref{lem:slice-tiling}, $\mathcal{T}|^{(i)}_{x}$ is a
$(d-1)$-tiling. It remains to prove that it is proper.  For any pair
of intersecting boxes $A|^{(i)}_{x}$ and $B|^{(i)}_{x}\in
\mathcal{T}|^{(i)}_{x}$, Theorem~\ref{thm:contactsystem} implies that
the original boxes $A,B\in \mathcal{T}$ touch in exactly one
dimension. As $x$ is not an endpoint of $A_{i}$ or $B_{i}$, this
dimension cannot be $i$. So the boxes $A|^{(i)}_{x}$ and
$B|^{(i)}_{x}$ touch in exactly one dimension. By
Theorem~\ref{thm:contactsystem}, this implies that
$\mathcal{T}|^{(i)}_{x}$ is proper.\hfill\qed
\end{proof}

Theorem~\ref{thm:contactsystem} provides us a simple proof for the
following strengthening of Lemma~\ref{lem:cut-tiling}.

\begin{lemma}~\label{lem:cut-tiling-proper}
  For every proper $d$-tiling $\mathcal{T}$, and every hyperplane
  $\mathcal{H}^{(i)}_{x}$ that is generic w.r.t. $\mathcal{T}$,
  both $\mathcal{T}|^{(i)-}_{x}$ and $\mathcal{T}|^{(i)+}_{x}$ are
  proper $d$-tilings.
\end{lemma}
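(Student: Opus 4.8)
The plan is to reduce everything to Theorem~\ref{thm:contactsystem}. Since Lemma~\ref{lem:cut-tiling} already guarantees that $\mathcal{T}|^{(i)-}_{x}$ and $\mathcal{T}|^{(i)+}_{x}$ are $d$-tilings, it suffices to prove that in each of them any two intersecting boxes touch in exactly one dimension. I would treat $\mathcal{T}|^{(i)-}_{x}$ in full; the case of $\mathcal{T}|^{(i)+}_{x}$ is then obtained by the mirror argument, reflecting the $i$-th coordinate and exchanging the roles of $\alpha^-$ and $\alpha^+$. Fix two intersecting boxes $A'=A|^{(i)-}_{x}$ and $B'=B|^{(i)-}_{x}$. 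The first thing to record is that the cutting operation only alters the $i$-th interval: for every $j\neq i$ we have $A'_j=A_j$ and $B'_j=B_j$, so $A'_j\cap B'_j=A_j\cap B_j$, and since $A',B'$ survive the cut we have $A_i^-<x$ and $B_i^-<x$.

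The technical heart, and the step I expect to be the main obstacle, is controlling the prolongation in dimension $i$: extending a clipped box up to $+1$ must neither create nor destroy an intersection, nor change its degeneracy. I would establish the following observation by a short case split according to whether each box lies entirely below $x$ (so $\alpha^-$ leaves its $i$-interval unchanged) or crosses $\mathcal{H}^{(i)}_{x}$ (so $\alpha^-$ prolongs it to $[\,\cdot\,,+1]$): for two surviving boxes, $A'_i\cap B'_i$ is nonempty if and only if $A_i\cap B_i$ is nonempty, and $A'_i\cap B'_i$ is degenerate if and only if $A_i\cap B_i$ is degenerate. The crucial point is that whenever one box is unchanged its $i$-upper-endpoint is strictly below $x$, hence below $+1$, so it alone determines the upper boundary of the intersection and the prolongation of the other box is invisible there; when both boxes cross, both already overlapped on a nondegenerate interval reaching up to $x$ and still do after being extended to $+1$. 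From this observation it follows immediately that since $A',B'$ intersect, the original boxes $A,B\in\mathcal{T}$ intersect as well.

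With that in hand the argument closes quickly. Because $\mathcal{T}$ is a proper $d$-tiling, Theorem~\ref{thm:contactsystem} tells us that $A$ and $B$ touch in exactly one dimension $t$, so $A_t\cap B_t$ is degenerate while $A_j\cap B_j$ is nondegenerate for all $j\neq t$. If $t\neq i$, then $A'_t\cap B'_t=A_t\cap B_t$ is degenerate, all other coordinates $j\neq i$ keep their nondegenerate intersections, and by the observation $A'_i\cap B'_i$ stays nondegenerate; thus $A'$ and $B'$ touch only in dimension $t$. If $t=i$, say $A_i^+=B_i^-=:c$, then $c=B_i^-<x$ forces $A_i^+=c<x$, so $A$ is unchanged and the degenerate intersection $\{c\}$ in dimension $i$ is preserved by the observation, while every coordinate $j\neq i$ remains nondegenerate; again $A'$ and $B'$ touch in exactly one dimension. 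In both cases $\dim(A'\cap B')=d-1$, so by Theorem~\ref{thm:contactsystem} the tiling $\mathcal{T}|^{(i)-}_{x}$ is proper, and the symmetric argument settles $\mathcal{T}|^{(i)+}_{x}$.
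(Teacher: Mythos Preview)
Your proposal is correct and follows essentially the same approach as the paper: both arguments reduce to Theorem~\ref{thm:contactsystem} and hinge on the observation that for $j\neq i$ one has $A'_j\cap B'_j=A_j\cap B_j$, while in dimension $i$ the intersection $A'_i\cap B'_i$ is nonempty (resp.\ degenerate) if and only if $A_i\cap B_i$ is. The only cosmetic difference is that the paper phrases it as a proof by contradiction and also explicitly checks pairs involving a box of $\mathcal{T}_{ext}$; you omit this, but it is harmless since any box of the tiling lies in $[-1,+1]^d$ and therefore can touch a box of $\mathcal{T}_{ext}$ in at most one dimension.
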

\begin{proof}
By Theorem~\ref{thm:contactsystem}, we can suppose towards a
contradiction that there are two boxes $A$ and $B$ of
$\mathcal{T}|^{(i)-}_{x} \cup \mathcal{T}_{ext}$ intersecting on a
$d'$-dimensional box for some $d'<d-1$. Clearly, $A$ and $B$ cannot
belong to $\mathcal{T}_{ext}$ both.  If $A\in \mathcal{T}|^{(i)-}_{x}$
and $B\in\mathcal{T}_{ext}$, say that $B=T(i,*)$ for some $i\in \{1,\ldots,d\}$
and $*\in\{-,+\}$, then as $A_j \subseteq [-1,+1] \subseteq B_j$ for
all $j\neq i$ we have that $d'=d-1$, a contradiction. Finally if both
$A$ and $B$ belong to $\mathcal{T}|^{(i)-}_{x}$, and if they
respectively come from $A'$ and $B'$ of $\mathcal{T}$, then we have
that $A_j\cap B_j = A'_j\cap B'_j$ for all $j\neq i$. Furthermore, by
construction $A_i\cap B_i$ is a non-degenerate interval if and only if
$A'_i\cap B'_i$ is non-degenerate.  Thus $A'\cap B'$ is also
$d'$-dimensional, a contradition.\hfill\qed
\end{proof}

Lemma~\ref{lem:slice-tiling-proper} now allows us to strengthen
Theorem~\ref{thm:contactsystem} as follows.

\begin{theorem}~\label{thm:prop=d+1-k}
A $d$-tiling is proper, if and only if for any set
$\mathcal{B}$ of pairwise intersecting boxes we have
$\dim(\cap_{B\in \mathcal{B}} B)=d+1-|\mathcal{B}|$.
\end{theorem}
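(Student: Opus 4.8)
The plan is to prove the two implications separately, the ``only if'' direction carrying essentially all of the work. For ``if'', it suffices to apply the hypothesis to each pair $\mathcal{B}=\{A,B\}$ of intersecting boxes: this gives $\dim(A\cap B)=d-1$ for every such pair, so $\mathcal{T}$ is proper by Theorem~\ref{thm:contactsystem}. For ``only if'', fix a proper $d$-tiling and a set $\mathcal{B}$ of $k=|\mathcal{B}|$ pairwise intersecting boxes. By the Helly property they share a common point $p$, and by Theorem~\ref{thm:contactsystem} each pair touches in exactly one dimension. I would first record the reformulation $\dim(\cap_{B\in\mathcal{B}}B)=d-m$, where $m$ is the number of dimensions in which some pair of $\mathcal{B}$ touches; indeed $(\cap_{B}B)_j=\cap_{B}B_j=[\max_B B_j^-,\min_B B_j^+]$ is degenerate exactly when some box has $p_j$ as its lower endpoint and another has $p_j$ as its upper endpoint, i.e.\ when some pair touches in dimension $j$. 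It therefore suffices to prove $m=k-1$.

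The inequality $m\ge k-1$ comes for free from the argument of Lemma~\ref{lemma-d-1-proper}: labelling each edge $AB$ of the complete graph on $\mathcal{B}$ by the unique dimension in which $A$ and $B$ touch partitions it into $m$ complete bipartite graphs (the two sides in dimension $t$ being the boxes with $p_t$ as upper, resp.\ lower, endpoint), so Graham--Pollak's theorem gives $m\ge k-1$, equivalently $\dim(\cap_B B)\le d+1-k$.

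The reverse inequality $m\le k-1$ is the heart of the matter, and is exactly where properness is used: it is what forbids configurations like the right-hand picture of Figure~\ref{fig:framed_and_proper}. I would prove it by induction on $d$ using Lemma~\ref{lem:slice-tiling-proper}. Pick a dimension $i$ used by $\mathcal{B}$ and split $\mathcal{B}$ according to $p_i$ into the boxes $L$ with upper $i$-endpoint $p_i$, the boxes $U$ with lower $i$-endpoint $p_i$, and the boxes $M$ that straddle $p_i$ in dimension $i$; both $L$ and $U$ are non-empty since $i$ is used. Cutting $\mathcal{T}$ by a generic hyperplane just below, respectively just above, $p_i$ yields two proper $(d-1)$-tilings whose surviving members of $\mathcal{B}$ are exactly $L\cup M$ and $U\cup M$, each still pairwise intersecting. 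The key geometric observation is that any pair with one box in $L$ and the other in $U$ touches \emph{only} in dimension $i$; hence forgetting coordinate $i$ creates no new touching, $\cap_{B\in\mathcal{B}}B$ equals the intersection of the two sliced subfamily-intersections, and a dimension $j\neq i$ is degenerate in it if and only if it is degenerate in one of the two slices. Feeding the induction hypothesis applied to $L\cup M$ and to $U\cup M$ into this then reduces the claim to a single counting identity.

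The step I expect to be the main obstacle is precisely that identity: the number of dimensions pinned simultaneously by $L\cup M$ and by $U\cup M$ must equal $|M|$. The difficulty is that the boxes of $M$ lie in both slices and are double-counted, so one must show that, beyond the $|M|-1$ dimensions already pinned by $M$ alone, exactly one further dimension is pinned by both the $L$-side and the $U$-side. I would isolate this as a separate lemma, proved by analysing at $p$ how adjoining $L$ (resp.\ $U$) to $M$ forces new degeneracies, again leaning on the fact that $L$ and $U$ never touch outside dimension $i$. Once this is settled, combining it with the Graham--Pollak bound yields $m=k-1$, and hence $\dim(\cap_{B\in\mathcal{B}}B)=d+1-k$.
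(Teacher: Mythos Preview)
Your ``if'' direction and the bound $\dim(\cap_{\mathcal B}B)\le d+1-k$ via Graham--Pollak are correct; the paper does not isolate this inequality, so that part is a pleasant simplification. The gap is entirely in the reverse inequality, at exactly the point you flag. Take the smallest nontrivial instance: $d=3$, $k=3$, with $L=\{A\}$, $M=\{B\}$, $U=\{C\}$, so $A$ and $C$ touch in dimension~$i$. Your induction on the two slices yields only $|D_{LM}|=|D_{UM}|=1$: $A$ touches $B$ in a unique $j\neq i$ and $C$ touches $B$ in a unique $j'\neq i$. Your identity $|D_{LM}\cap D_{UM}|=|M|$ requires $j=j'$, but nothing in your toolkit (slicing, the observation that $L$--$U$ pairs touch only in $i$) forces this---that observation constrains $L$--$U$ pairs, whereas here the relevant pairs are $L$--$M$ and $U$--$M$. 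If $j\neq j'$ then $A\cap B\cap C$ is a single point and $m=3>k-1$; ruling this out is precisely the theorem at $(d,k)=(3,3)$. So your ``separate lemma'' is not a side computation but carries the entire hard direction, and the inclusion--exclusion has reformulated the problem rather than reduced it.

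The paper closes this by a different mechanism. It inducts on the pair $(d,k)$ and splits on $\dim I$. If $\dim I>0$ it slices along a \emph{non-degenerate} coordinate of $I$, keeping all of $\mathcal B$ intact and simply reducing $d$. If $\dim I=0$ and $k\le d$ it drops one box, obtains $\dim J=d+2-k\ge 2$ for the remaining intersection $J$ by induction on $k$, and then---crucially---invokes Lemma~\ref{lem:touch} to exhibit $\dim J$ further boxes of $\mathcal T\cup\mathcal T_{\mathrm{ext}}$ through the point $p$, one for each non-degenerate coordinate of $J$, giving $k+(d+2-k)=d+2$ boxes at $p$ and contradicting properness. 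That geometric step, manufacturing extra boxes from the ambient tiling, is the ingredient your purely combinatorial bookkeeping lacks; without something of that kind you cannot separate the proper case from the forbidden configuration on the right of Figure~\ref{fig:framed_and_proper}.
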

\begin{proof}
Theorem~\ref{thm:contactsystem} clearly implies that this condition is
sufficient. Let us then prove that this condition is necessary. We
have to prove that in a proper $d$-tiling, for any set $\mathcal{B}$
of pairwise intersecting boxes we have $\dim(\cap_{B\in \mathcal{B}}
B)=d+1-|\mathcal{B}|$.

Let $k=|\mathcal{B}|$. By Lemma~\ref{lemma-d-1-proper} we know that
$1\le k \le d+1$. We already know that the implication holds for $k=1$
or $k=2$. For the remaining cases we proceed by induction on the pair
$(d,k)$. That is, we assume the theorem holds for $(d',k')$ ($k'$
pairwise intersecting boxes in a proper $d'$-tiling) with $d'<d$ or with
$d'=d$ and $k'<k$.

Consider any set $\mathcal{B}$ of $k$ pairwise intersecting boxes, and
let $I=\cap_{B\in \mathcal{B}} B$.

{\bf We consider first that $\dim(I)>0$}, and we assume without loss of
generality that $I$ is non degenerate in dimension $d$. Let $x$ be a
value in the interior of $I_d$, and such that $x$ is not an endpoint
of $A_d$, for any box $A\in \mathcal{T}$. By
Lemma~\ref{lem:slice-tiling-proper} $\mathcal{T}|^{(d)}_x$ is a proper
$(d-1)$-tiling. In this tiling, $\mathcal{B}$ is also a set of $k$
pairwise intersecting boxes, which intersect on $I|^{[d)}_x$. By
  induction on $(d-1,k)$ we have that $\dim(I|^{[d)}_x) = d-k$, and
    thus that $\dim(I) = d+1-k$.

{\bf We now consider that $\dim(I)=0$} and that $k\le d$ (if $k=d+1$
we are fine).  Consider any $B\in \mathcal{B}$, and let $J=\cap_{H\in
  \mathcal{B} \setminus B} H$. By induction hypothesis
$\dim(J)=d+2-k\ge 2$, and without loss of generality we consider that
$J$ is non-degenerate in dimension $i$ if and only if $1\le i\le
\dim(J)$. As $I=B\cap J$ we have that $\dim(B\cap J)=0$ and we can
assume without loss of generality that $B_i^+ = J_i^-$ for every
$i\in\{1,\ldots, \dim(J)\}$. Let us also denote by $p$ the point where
$B$ and $J$ intersect.

\begin{claim}
For every $i\in\{1,\ldots,\dim(J)\}$ there is a box $F$ of
$(\mathcal{T}\cup \mathcal{T}_{ext})\setminus \mathcal{B}$ such that
$p\in F$, such that $\dim(F \cap J ) = \dim(J)-1 = d+1-k$, and such
that $F_i\cap J_i$ is degenerate.
\end{claim}
For each $i\in \{1,\ldots ,\dim(J)\}$, if $J_i^-=p_i$ it is because
some box $A\in \mathcal{B} \setminus B$ is such that $A^-_i= p_i$. Let
$q$ be an interior point of $J|^{(i)}_{p_i}$ that is arbitrarily close
to $p$ (thus every box containing $q$ also contains $p$).  By
Lemma~\ref{lem:touch} there is a box $F\in \mathcal{T}\cup
\mathcal{T}_{ext}$ such that $q\in A\cap F$, and such that $A$ and $F$
touch only in dimension $i$.  As $F^+_i= p_i$, we have that $F_i\cap
J_i = [p_i,p_i]$ is degenerate while for every $j\in \{1,\ldots
,\dim(J)\}\setminus \{i\}$ as $q_j\in F_i\cap J_i = F_i \cap_{H\in
  \mathcal{B} \setminus B} H $ the interval $F_i\cap J_i$ is
non-degenerate.  We thus have that $\dim(F\cap J) = \dim(J)-1\ge 1$.
As $\dim(B\cap J)=0$ we have that $F\neq B$, and as $J\not\subset F$,
$F\notin \mathcal{B} \setminus B$.  So this box $F$ does not belong to
$\mathcal{B}$.

\begin{claim}
There are $\dim(J)$ such boxes $F$.
\end{claim}
If a box $F$ is such that $F_i\cap J_i$ is degenerate for two distinct
values in $\{1,\ldots ,\dim(J)\}$ then $\dim(F\cap J) \le \dim(J)-2$,
and so $F$ does not verifies the previous claim. So for each value
$i\in \{1,\ldots ,\dim(J)\}$ there is a distinct box $F$ fulfilling
the previous claim.

The theorem now follows from the fact that all the $k$ boxes of
$\mathcal{B}$ and all the $\dim(J)=d+2-k$ boxes $F$ intersect at
$p$, contradicting the fact that $\mathcal{T}$ is proper.\hfill\qed
\end{proof}





\section{Separations}\label{sec:sep}

Let us now define an equivalence relation $\sim$ on the set of sides
of $\mathcal{T}\cup \mathcal{T}_{ext}$.  The relation $\sim$ is the
transitive closure of the relation linking two sides if and only if
they intersect on a $(d-1)$-dimensional box. If the boxes $A$ and $B$
touch in dimension $i$, then $S(A,i,*)$ and $S(B,i,*^{-1})$ intersect
on a $(d-1)$-dimensional box, for some $*\in\{-1,+1\}$.  A
\emph{separation} is then defined as the union of all the boxes of
some equivalence class of $\sim$. Note that a separation is a finite
union of $(d-1)$-dimensional boxes that are degenerate in the same
dimension. If this dimension is $i$, by extension we say that this
separation is degenerated in dimension $i$.

\begin{lemma}~\label{lem:separation_box}
  Any separation $S$ of a proper $d$-tiling $\mathcal{T}$ is a
  $(d-1)$-dimensional box.
\end{lemma}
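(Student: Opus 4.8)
The plan is to fix a separation $S$ and pin down its shape by exploiting the properness of $\mathcal{T}$. First I would check that all sides in the equivalence class defining $S$ lie in one common axis-hyperplane. Two sides meeting on a $(d-1)$-box must be degenerate in the same dimension $i$ (otherwise their intersection is degenerate in two dimensions, hence of dimension at most $d-2$) and must share the value $B_i^{*}$ of that coordinate; transitivity then keeps both the dimension $i$ and the value fixed. So, up to relabelling, $S\subseteq \mathcal{H}^{(d)}_{c}$ for some $c$, and omitting the $d$-th coordinate turns $S$ into a connected finite union $S'$ of $(d-1)$-boxes in $\mathbb{R}^{d-1}$. A useful remark is that $S'$ is tiled in two ways: by the top faces of the boxes of $\mathcal{T}$ lying just below $\mathcal{H}^{(d)}_{c}$ whose side belongs to the class, and by the bottom faces of the boxes lying just above; each interior point of $S'$ is covered by exactly one box from each side, and these two boxes are distinct (a box with top face on the hyperplane cannot also have its bottom face there). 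It then suffices to prove that $S'$ is a single box, and for this I would show that $S'$ has no hole and no reflex corner. A connected finite union of axis-parallel boxes with this property bounds a convex rectilinear region, and the only such region is a box.

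Both a hole and a reflex corner produce a \emph{reflex $(d-3)$-face}: a $(d-3)$-dimensional face of $\partial S'$, constant in two coordinate directions $j,k$, near whose relative interior $S'$ occupies exactly three of the four quadrants of the $(j,k)$-plane. I would rule these out by reducing to dimension three. Choosing a relative-interior point $q$ of such a face that is generic in the remaining $d-3$ directions (so that $q_\ell$ is interior to the corresponding interval of every box meeting the point), I slice $\mathcal{T}$ successively by the $d-3$ generic hyperplanes $\mathcal{H}^{(\ell)}_{q_\ell}$ for $\ell\notin\{j,k,d\}$. By repeated application of Lemma~\ref{lem:slice-tiling-proper} this yields a \emph{proper} $3$-tiling in which $S$ restricts to a separation whose projection still exhibits the same three-quadrant configuration around the image of $q$. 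This reduction is essential: counting boxes directly in $\mathbb{R}^d$ would only ever expose five boxes through the lifted point, which contradicts properness only when $d\le 3$; slicing away the inert directions turns this into a genuine violation in the proper $3$-tiling, where by the definition of properness at most $d+1=4$ boxes may share a point.

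The heart of the argument, and the step I expect to be the main obstacle, is the box count in this proper $3$-tiling. Write $\hat P=(q,c)$ for the lifted corner. Each of the three present quadrants carries a below-box (top face on $\mathcal{H}^{(d)}_{c}$) and a distinct above-box (bottom face on $\mathcal{H}^{(d)}_{c}$), all containing $\hat P$. Among the three below-boxes at least two are distinct, since a single box has a rectangular top face and cannot cover the L-shaped union of three quadrants; the same holds for the three above-boxes, and a below-box is never equal to an above-box. This already produces four boxes through $\hat P$. The missing quadrant supplies a fifth: either a box straddles $\mathcal{H}^{(d)}_{c}$ there, and such a box differs from all four (its top lies strictly above and its bottom strictly below $c$); or that quadrant too is split by $\mathcal{H}^{(d)}_{c}$, and then its below-box must differ from the present below-boxes, for otherwise its single rectangular top side would already reach into the missing quadrant and place that quadrant inside $S'$. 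In either case $\hat P$ lies in at least five boxes of the proper $3$-tiling, contradicting Theorem~\ref{thm:prop=d+1-k}. Hence no reflex $(d-3)$-face and no hole can occur, so $S'$ is a box, and therefore so is $S$.
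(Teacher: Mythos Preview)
Your approach is sound and lands on the same five-box count the paper uses for $d=3$, but you organise the reduction differently. The paper proceeds by induction on $d$: it first observes (via the inductive hypothesis and Lemma~\ref{lem:slice-tiling-proper}) that every axis-parallel slice $S|^{(j)}_x$ of the separation is again a box, handles $d=3$ directly by the $3\pi/2$-angle five-box argument, and for $d\ge 4$ proves a self-contained geometric claim: a connected finite union of full-dimensional boxes all of whose axis-parallel hyperplane sections are boxes is itself a box. The proof of that claim locates four nearby points forming an axis-parallel rectangle with exactly three of them in $S$, which forces some third-direction slice to fail to be a box. You instead slice $\mathcal{T}$ down to a proper $3$-tiling along $d-3$ generic hyperplanes through the reflex face and run the five-box count once, in dimension three. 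Your route is more uniform and avoids splitting off the $d\ge 4$ case; the paper's route is a bit more self-contained because it does not need to argue that the reflex configuration and the relevant below/above boxes survive the iterated slicing.

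The one step you assert rather than prove is that a connected finite union of axis-parallel $(d-1)$-boxes which is not a single box must possess a reflex $(d-3)$-face, i.e.\ a point where $S'$ occupies exactly three of four quadrants in some coordinate $2$-plane. This is true, but it is essentially the contrapositive of the paper's geometric claim for $d\ge 4$, and the paper's four-point argument is one clean way to supply it. As written, your sentence ``A connected finite union of axis-parallel boxes with this property bounds a convex rectilinear region, and the only such region is a box'' carries the whole higher-dimensional burden; you should either import the paper's argument here or give your own, since this is exactly where the general case collapses to the three-dimensional picture.
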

\begin{proof}
  This clearly holds for $d\le 2$, we thus assume that $d\ge 3$.
  Consider a separation $S$ that is degenerated in dimension $i$. By
  induction on $d$ we obtain that for any $j\in\{1,\ldots,d\}\setminus
  \{i\}$ and any $x\in [-1,+1]$ $S|^{(j)}_x$ is either empty, or it is
  a box. Indeed, for every generic $\mathcal{H}|^{(j)}_x$,
  $S|^{(j)}_x$ is either empty, or it is a separation of
  $\mathcal{T}|^{(i)}_x$.
  

  Let us first prove the lemma for $d=3$. If $S$ is not a
  $2$-dimensional box it has a $3\pi/2$ angle at some point $p$. It is
  clear that two boxes are necessary below (resp. above) $p$ with
  respect to the dimension $i$ to form the separation $S$, while the
  remaining $\pi/2$ angle at $p$ has to be covered by another (at
  least) fifth box, contradicting the fact that $\mathcal{T}$ is
  proper.

  For $d\ge 4$, the lemma follows from the following claim
  (considering $S\subseteq \mathcal{H}^{(i)}_{x} \simeq \RR^{d-1}$).
\begin{claim}
  Consider a connected set $S\subset \RR^d$ with $d\ge 3$, that is a
  finite union of $d$-dimensional boxes.  If for any $i\in \{1,\ldots
  ,d\}$ and any $x\in [-1,+1]$ we have that $S|^{(i)}_x$ is either
  empty or a box, then $S$ is a box.
\end{claim}
Towards a contradiction suppose that $S$ is not a box. In such case
for some dimension $i\in\{1,\ldots,d\}$, and for some values $x, x'\in
(-1,+1)$ the sets $S|^{(i)}_{x}$ and $S|^{(i)}_{x'}$ are two boxes
that differ (at least) on their $j^\text{th}$ interval for some $j\neq
i$. Thus following a curve going from $S|^{(i)}_{x}$ to
$S|^{(i)}_{x'}$ one goes through a point $p\in \RR^d$ with the
following property: For any sufficiently small $\epsilon$, the sets
$S|^{(i)}_{p_i}$ and $S|^{(i)}_{p_i+\epsilon}$ are two boxes that
differ (at least) on their $j^\text{th}$ interval for some $j\neq i$
and, as $S$ is a finite union of boxes, these boxes respectively
contain the points $p$ and $p'=(p_1,\ldots, p_{i-1},p_i+\epsilon,
p_{i+1}, \ldots,p_d)$. As their $j^\text{th}$ interval differ, we can
assume that for some $y\in (-1,+1)$ we have that $q=(p_1,\ldots, p_i,
\ldots , p_{j-1}, y, p_{j+1}, \ldots,p_d) \in S|^{(i)}_{p_i}$ and that
$q' = (p_1,\ldots, p_i+\epsilon, \ldots , p_{j-1}, y, p_{j+1},
\ldots,p_d) \notin S|^{(i)}_{p_i+\epsilon}$.  This implies that for
any $k\neq i,j$, the set $S|^{(k)}_{p_k}$ contains exactly three of
the aforementioned four points, contradicting the fact that it is a
box.\hfill\qed
\end{proof}

A $d$-tiling $\mathcal{T}$ is said in \emph{general position} if it
does not contain two coplanar separations, that is two separations
belonging to the same hyperplane $\mathcal{H}^{(i)}_{x}$.
\begin{lemma}~\label{lem:gen_position}
Any proper $d$-tiling $\mathcal{T}$ can be slightly perturbated to
obtain an equivalent one $\mathcal{T'}$ that is in general position.
Here equivalent means that the elements of $\mathcal{T}$ and
$\mathcal{T'}$ are in bijection and that two boxes of $\mathcal{T}\cup
\mathcal{T}_{ext}$ touch in dimension $i$ if and only if the
corresponding two boxes of $\mathcal{T'}\cup \mathcal{T}_{ext}$ touch
in dimension $i$.
\end{lemma}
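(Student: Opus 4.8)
The plan is to build $\mathcal{T}'$ by sliding each separation, regarded as a rigid wall, to a slightly perturbed position, the positions being chosen so that no two walls remain in a common hyperplane.

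First I would set up the correspondence between separations and coordinates. By Lemma~\ref{lem:separation_box} each separation $S$ is a single $(d-1)$-dimensional box; write $i(S)$ for the dimension in which it is degenerate and $x(S)\in[-1,+1]$ for the common $i(S)$-th coordinate of its points, so that $S\subseteq\mathcal{H}^{(i(S))}_{x(S)}$. Every side of every box of $\mathcal{T}\cup\mathcal{T}_{ext}$ lies on exactly one separation, and for a box $B$ the side $S(B,i,-)$ (resp. $S(B,i,+)$) lies on a separation of position $B_i^-$ (resp. $B_i^+$); thus every box coordinate equals the position of the separation carrying the corresponding side. The combinatorial data to be preserved --- which pairs of boxes touch in which dimension --- is encoded by the order-and-coincidence pattern of these positions: for boxes $A,B$ and a dimension $j$, the relation between the intervals $A_j$ and $B_j$ (disjoint, touching, or overlapping) is determined by the order relations among the four positions $x(S(A,j,\pm))$ and $x(S(B,j,\pm))$. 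The one point I would stress here is that, by Lemma~\ref{lem:prop=d-1}, if $A$ and $B$ touch in dimension $j$ then $\dim(A\cap B)=d-1$, so $S(A,j,+)\cap S(B,j,-)$ is $(d-1)$-dimensional and hence these two sides lie on a \emph{single} separation; the coincidence $A_j^+=B_j^-$ is therefore the value of one separation, not of two distinct coplanar ones.

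Next I would define the perturbation. For reals $(\delta_S)_S$, let $\mathcal{T}'$ be the family obtained from $\mathcal{T}$ by replacing each endpoint $B_i^\pm=x(S)$ (for the separation $S$ carrying that side) by $x(S)+\delta_S$, with $\delta_S=0$ whenever $S$ lies in a boundary hyperplane $\mathcal{H}^{(i)}_{\pm1}$ --- each such hyperplane contains a single separation, since the large side of the corresponding box of $\mathcal{T}_{ext}$ meets every boundary side there on a $(d-1)$-box and so links them all. Geometrically this slides the wall $S$ while keeping the other walls fixed. Two distinct separations in the same hyperplane $\mathcal{H}^{(i)}_{x}$ are not $\sim$-related and hence share no side; moreover no box can have both its $i$-sides on them, since that would force $B_i^-=B_i^+=x$, contradicting $\dim(B)=d$. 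Thus coplanar walls are moved through disjoint sets of boxes, and since by Lemma~\ref{lem:separation_box} each wall is a single flat $(d-1)$-box whose two sides are exactly tiled by the adjacent boxes, a small uniform shift of the wall leaves $\mathcal{T}'$ a family of interior-disjoint boxes tiling $[-1,+1]^d$ with the same $\mathcal{T}_{ext}$.

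Finally I would choose the $\delta_S$ and verify the two required properties. Let $\eta$ be smaller than the least gap between two distinct coordinate values occurring in a common dimension. For $|\delta_S|<\eta$ every strict inequality among positions in each dimension is preserved, and every coincidence --- always coming from two sides on one separation --- is maintained; hence each pairwise interval relation of the first paragraph is unchanged, so $\mathcal{T}'$ is equivalent to $\mathcal{T}$ in the required sense and, by Theorem~\ref{thm:contactsystem}, is again proper. For general position it then suffices to make the positions within each dimension pairwise distinct: the only positions that can still collide are those of separations that were originally coplanar (non-coplanar ones stay more than $2\eta$ apart), so a generic choice of $(\delta_S)$ in $(-\eta,\eta)^{|\mathcal{S}|}$, where $\mathcal{S}$ is the set of separations, avoids the finitely many affine equations $x(S)+\delta_S=x(S')+\delta_{S'}$ over same-dimension pairs $S\neq S'$ and yields a $\mathcal{T}'$ in general position. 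The step I expect to be most delicate is the consistency of the wall-shift in the second paragraph --- that the sides on one separation really assemble into one flat box tiled on both sides, so that a uniform shift is legal --- but this is exactly what Lemma~\ref{lem:separation_box} provides, the disjoint-support remark making the simultaneous shifts non-interfering.
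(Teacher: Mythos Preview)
Your proof is correct and follows essentially the same strategy as the paper: slide each separation by a small amount so as to destroy coplanarities while preserving all touching relations. The only difference is one of packaging --- the paper first observes that coplanar separations are disjoint and then inducts on the number of separations that are coplanar with another, moving one wall at a time, whereas you perturb all walls simultaneously by a generic vector $(\delta_S)_S$; both executions rest on Lemma~\ref{lem:separation_box} and Lemma~\ref{lem:prop=d-1} in exactly the same way.
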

\begin{proof}
First note that two coplanar separations do not intersect. Otherwise,
if two such separations $S$ and $S'$ would intersect at some point
$p$, then there would be a box $A$ below $S$ (with respect to $i$)
that would contain $p$ and a box $B$ above $S'$ (with respect to $i$)
that would also contain $p$. As $A$ and $B$ intersect they should
intersect on a $(d-1)$-dimensional box, but as this intersection is
degenerated in dimension $i$ then $S(A,i,+) \sim S(B,i,-)$,
contradicting the fact that they belong to distinct equivalence
classes.

Let us now proceed by induction on the number of separations that are
coplanar with another separation. Consider such a separation $S$ that
is degenerated in dimension $i$. For any box $A$ below $S$ (resp. $B$
above $S$) with respect to dimension $i$, replace $A_i =[A^-_i,A^+_i]$
with $[A^-_i,A^+_i+\epsilon]$ (resp. replace $B_i =[B^-_i,B^+_i]$ with
$[B^-_i -\epsilon, B^+_i]$). It is clear that for a sufficiently small
$\epsilon$ any two boxes intersect and touch on a dimension $j\neq i$
in $\mathcal{T'}\cup \mathcal{T}_{ext}$ if and only if they did in
$\mathcal{T}\cup \mathcal{T}_{ext}$. The intersections degenerated in
dimension $i$ either remained the same, either were simply translated
(for those belonging to $S$). As we have now one separation less that
is coplanar with another separation, we are done.\hfill\qed
\end{proof}

\section{Simplicial complexes and Dushnik-Miller dimension}\label{sec:dimdm}

Given a $d$-tiling $\mathcal{T}$, the simplicial complex
$\mathcal{S}(\mathcal{T})$ \emph{induced} by $\mathcal{T}$ is defined
as follows. Let $\mathcal{T}$ be the vertex set of
$\mathcal{S}(\mathcal{T})$ and let a set $F\subseteq \mathcal{T}$ be a
face of $\mathcal{S}(\mathcal{T})$ if and only if the elements of $F$
intersect, that is if $\bigcap_{B\in F} B \not=\emptyset$. From this
definition, it is clear that if $F\subseteq \mathcal{T}$ is a face of
$\mathcal{S}(\mathcal{T})$, any subset of $F$ is also a face of
$\mathcal{S}(\mathcal{T})$. So $\mathcal{S}(\mathcal{T})$ is indeed a
simplicial complex.

The Dushnik-Miller dimension of a simplicial complex $\mathcal{S}$,
denoted by $\dimdm(\mathcal{S})$, is the minimum integer $k$ such that
there exist $k$ linear orders $<_1,\ldots,<_k$ on $V$, where $V$ is
the vertex set of $\mathcal{S}$, such that for every face $F$ of
$\mathcal{S}$ and for every vertex $u\in V$, there exists some $i$
such that $\forall v\in F$, $v\le_i u$. Such set of orders is said to
be a \emph{realizer} of $\mathcal{S}$. Note that if $\mathcal{T}$ has
$p$ pairwise intersecting boxes, $\mathcal{S}(\mathcal{T})$ has a face
$F$ of size $p$ (usually such face is said to have dimension $p-1$),
and this implies that $\dimdm(\mathcal{S}(\mathcal{T})) \ge
p$. Indeed, every vertex $v\in F$ has to be greater than the other
vertices of $F$ in some order. This shows why Theorem~\ref{thm:main} is
tight in most cases.

\newcommand{\GT}[0]{\overrightarrow{G\:}\hspace{-1.5mm}\left(\mathcal{T}\right)}
\newcommand{\GTp}[0]{\overrightarrow{G\:}\hspace{-1.5mm}\left(\mathcal{T'}\right)}
\newcommand{\GTT}[2]{\overrightarrow{G\:}\hspace{-1.5mm}\left(\mathcal{T}|^{#1}_{#2}\right)}

\begin{definition}
  Given a proper $d$-tiling $\mathcal{T}$, let $\GT{}$ be the
  digraph with vertex set $\mathcal{T}$ and with an arc $AB$ if and only
  if there exist points $a\in A$ and $b\in B$ such that $b_i < a_i$
  for all $i\in\{1,\ldots,d\}$. Note that this is equivalent as
  saying that $\GT{}$ has an arc $AB$ if and only if $B^-_i < A^+_i$
  for all $i\in\{1,\ldots,d\}$.
\end{definition}

\begin{lemma}~\label{lem:acyclic_2}
  For any proper $d$-tiling $\mathcal{T}$, the digraph $\GT{}$ is
  acyclic.
\end{lemma}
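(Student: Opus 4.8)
The plan is to argue by contradiction: I assume that $\GT{}$ contains a directed cycle and choose one, $A^{(1)}\to A^{(2)}\to\cdots\to A^{(k)}\to A^{(1)}$, of minimum length $k$. First I would dispose of $k=2$. If both $A^{(1)}A^{(2)}$ and $A^{(2)}A^{(1)}$ were arcs, then $(A^{(2)})_i^-<(A^{(1)})_i^+$ and $(A^{(1)})_i^-<(A^{(2)})_i^+$ would hold for every $i$, i.e.\ $(A^{(1)})_i\cap(A^{(2)})_i$ would be non-degenerate in every dimension, so $A^{(1)}$ and $A^{(2)}$ would have intersecting interiors, which is impossible for two distinct boxes of a tiling. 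Hence $k\ge 3$. Next I record, for each arc $A^{(j)}\to A^{(j+1)}$, a \emph{separating dimension}: since the two boxes are interior disjoint there is some $i$ with $(A^{(j)})_i\cap(A^{(j+1)})_i$ empty or degenerate, and the arc condition $(A^{(j+1)})_i^-<(A^{(j)})_i^+$ rules out $(A^{(j)})_i^+\le(A^{(j+1)})_i^-$, so necessarily $(A^{(j+1)})_{s_j}^+\le(A^{(j)})_{s_j}^-$ for this dimension $s_j$; that is, $A^{(j+1)}$ lies weakly below $A^{(j)}$ along $s_j$.

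The core of the argument is to turn the cycle into a single overcrowded point. I would first establish two structural facts: that the $k$ cycle boxes pairwise intersect, so that by the Helly property of boxes they share a common point $p$, and that, by minimality of the cycle, the separating dimensions $s_1,\dots,s_k$ are pairwise distinct. Granting these, two estimates of $\dim(I)$, where $I=\bigcap_{j}A^{(j)}$, collide. On one hand $\{A^{(1)},\dots,A^{(k)}\}$ is a set of pairwise intersecting boxes in a proper $d$-tiling, so Theorem~\ref{thm:prop=d+1-k} gives $\dim(I)=d+1-k$. On the other hand, for each arc the relation $(A^{(j+1)})_{s_j}^+\le(A^{(j)})_{s_j}^-$ together with $p\in A^{(j)}\cap A^{(j+1)}$ forces $(A^{(j+1)})_{s_j}^+=p_{s_j}=(A^{(j)})_{s_j}^-$, so $I$ is degenerate in dimension $s_j$; as the $s_j$ are $k$ distinct dimensions, $\dim(I)\le d-k$. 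Comparing the two, $d+1-k\le d-k$, i.e.\ $1\le 0$, the desired contradiction. As a sanity check, the smallest instance ($d=k=3$) is realised by three boxes meeting pairwise along the three distinct coordinate directions at a common corner, where indeed the common intersection is forced to be a single point instead of a segment.

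The hard part is exactly these two structural claims, since neither follows from the per-arc inequalities in isolation. For distinctness I would generalise the three-box computation: assuming $s_a=s_b=t$ pins $A^{(a)},A^{(b)}$ to begin and $A^{(a+1)},A^{(b+1)}$ to finish at $p_t$ in coordinate $t$, and I would then play these four boundary conditions against the arc inequalities along the two sub-paths of the cycle joining them, aiming at an inequality that contradicts one of the arcs; minimality is what makes the relevant comparisons available, since a shortcut would instead produce a strictly shorter cycle. For the common point I would again lean on minimality together with Lemma~\ref{lem:touch}: a pair of cycle boxes that fail to intersect should either admit a chord, contradicting minimality, or force, via Lemma~\ref{lem:touch}, an extra box through the relevant point, overshooting the bound of $d+1$ boxes allowed by properness. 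I expect essentially all of the work to sit in this common-point/distinctness step. An alternative route worth trying is induction on $d$ through the slicing Lemma~\ref{lem:slice-tiling-proper}: a generic hyperplane $\mathcal{H}^{(i)}_{x}$ crossing all cycle boxes restricts the cycle to a directed cycle of the proper $(d-1)$-tiling $\mathcal{T}|^{(i)}_{x}$, because slicing merely deletes the $i$-th coordinate from the arc condition and hence preserves every arc; but the obstruction there is precisely guaranteeing such a common transversal hyperplane, which once more reduces to the common-point claim.
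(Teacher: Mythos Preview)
Your plan diverges sharply from the paper's argument, and the place you yourself flag as ``essentially all of the work'' is a genuine gap rather than a routine step.

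The central unproven claim is that the boxes of a minimal directed cycle share a common point. This is not a matter of bookkeeping: there is no reason, a priori, for cycle boxes even to pairwise intersect. An arc $A\to B$ only says $B_i^-<A_i^+$ for every $i$; it gives no control over $A_i^-$ versus $B_i^+$, so two boxes joined by an arc may be disjoint (e.g.\ in $[-1,1]^2$ take $A=[0,1]^2$ and $B=[-1,-\tfrac12]^2$). Your sketch for the common point (``a pair that fails to intersect should admit a chord, or force an extra box via Lemma~\ref{lem:touch}'') does not connect: minimality tells you that non-consecutive $A^{(i)},A^{(j)}$ have \emph{no} arc either way, which yields separations $A^{(j)}_\ell^-\ge A^{(i)}_\ell^+$ and $A^{(i)}_m^-\ge A^{(j)}_m^+$ in some coordinates, pushing the boxes apart rather than together; and Lemma~\ref{lem:touch} needs a point already in some box to produce a neighbour, so it cannot manufacture a common point from boxes that may be far apart. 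Without the common point you cannot invoke Theorem~\ref{thm:prop=d+1-k}, and the whole contradiction collapses. The distinctness of the $s_j$ has the same problem: your argument for it presupposes the common point $p$ (you use $p_{s_j}=(A^{(j)})_{s_j}^-=(A^{(j+1)})_{s_j}^+$), so it cannot be established first and then used to get $p$. Your slicing alternative, as you note, runs into the very same obstruction.

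The paper does something quite different: it argues by induction on $|\mathcal{T}|$. After putting $\mathcal{T}$ in general position (Lemma~\ref{lem:gen_position}), it observes that the box $X$ containing $(-1,\dots,-1)$ is a sink of $\GT{}$, and then constructs a proper $d$-tiling $\mathcal{T}'$ with one fewer box by extending the neighbours of $X$ across dimension $i$ (where $i$ is singled out via a parity argument on the $2^d$ points near the far corner of $X$, and the extension is justified using Lemma~\ref{lem:separation_box}). One checks that $\GT{}\setminus X$ embeds as a subgraph of $\overrightarrow{G\:}\hspace{-1.5mm}(\mathcal{T}')$, which is acyclic by induction. No common-point or Helly-type statement about hypothetical cycles is needed.

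If you want to salvage your line of attack, the honest obstacle is to replace the common-point claim with some monotone quantity along arcs, or to show directly (without assuming intersection) that the chain of separating inequalities $(A^{(j+1)})_{s_j}^+\le (A^{(j)})_{s_j}^-$ cannot close up; note that if all $s_j$ coincide this is immediate, but with mixed coordinates it is not, and that is precisely where the difficulty lies.
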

\begin{proof}
We proceed by induction on $d$ and on the size of $\mathcal{T}$.  The
lemma clearly hold for $d=1$ or for $|\mathcal{T}|=1$, and we thus
focus on the induction step. We thus consider any proper $d$-tiling
$\mathcal{T}$ with $d>1$ and $|\mathcal{T}|>1$, and we assume that the
lemma holds for any $d'$-tiling with $d'<d$, and for any $d$-tiling
with less boxes than $\mathcal{T}$. By Lemma~\ref{lem:gen_position} we
also assume that $\mathcal{T}$ is in general position.

Let $X$ be the unique box of $\mathcal{T}$ containing the point
$(-1,\ldots,-1)$.  It is clear that $X$ is a sink in $\GT{}$, so it
suffices to show that $\GT{} \setminus X$ is acyclic. In the following
we show this by constructing a proper $d$-tiling $\mathcal{T'}$ with
one box less than $\mathcal{T}$ and such that $\GT{} \setminus X$ is a
subgraph of $\GTp{}$. The digraph $\GTp{}$ being acyclic by induction
hypothesis, so is its subgraph $\GT{} \setminus X$.

\begin{claim}
  There exists an $i\in \{1,\ldots,d\}$ and a box $Y\in \mathcal{T}$
  such that $X_i^+ = Y_i^-$, and such that $X_j^+ = Y_j^+$ for all
  $j\neq i$.
\end{claim}
For any point $p\in\mathbb{R}^d$, any $\epsilon>0$, and any box $B$,
the set $\mathcal{P}(p,\epsilon)$ of $2^d$ points
$(p_1+\epsilon_1,\ldots,p_d+\epsilon_d)$, for $\epsilon_i\in
\{-\epsilon,+\epsilon\}^d$, intersect $B$ on a number of points that
is a power of two. Furthermore, when $\epsilon$ is sufficiently small,
all the boxes of $\mathcal{T}$ intersecting $\mathcal{P}(p,\epsilon)$
contain the point $p$ (and thus intersect each other), and any point
of $\mathcal{P}(p,\epsilon)$ belongs to exactly one box.  Thus for the
point $p$ defined by $p_i =X_i^+$, and for a sufficiently small
$\epsilon>0$, the box $X$ contains exactly one point of
$\mathcal{P}(p,\epsilon)$. But as $|\mathcal{P}(p,\epsilon)|$ is even,
there is (at least) one other box in $\mathcal{T}$ that contains
exactly one point of $\mathcal{P}(p,\epsilon)$, let us denote $Y$ this
box. Since $X$ and $Y$ intersect on a $(d-1)$-dimensional box, let us
denote $i$ the dimension where they touch, and note that $X_i^+ =
Y_i^-$, and that $X_j^+ = Y_j^+$ for all $j\neq i$, otherwise $Y$
would intersect $\mathcal{P}(p,\epsilon)$ on more points.

\begin{claim}
  For any box $B\in \mathcal{T}$ touching $X$ in dimension $i$ we have
  that its side $S(B,i,-)$ is contained in $X$'s side $S(X,i,+)$.
\end{claim}
The previous claim implies that for each $j\neq i$ there exists a
separation containing $S(X,j,+)$ and $S(Y,j,+)$. By
Lemma~\ref{lem:separation_box} such separation contains the box
$[-1,X^+_1]\times \cdots \times [-1,X^+_{i-1}]\times
[-1,X^+_i+\epsilon]\times [-1,X^+_{i+1}]\times \cdots \times
[X^+_{j},X^+_{j}]\times\cdots \times [-1,X^+_{d}]$. If some box $B\in
\mathcal{T}$ touching $X$ in dimension $i$ has its side $S(B,i,-)$ not
contained in $X$'s side, for example because $S(B,i,-)_j \not\subseteq
S(X,i,+)_j$, then some interior point of $S(B,i,-)$ is also in the
interior of $S(X,j,+)$, a contradiction.

We thus define $\mathcal{T'}$ from $\mathcal{T}\setminus X$ in the
following way:
\begin{itemize}
  \item For any box $B\in \mathcal{T}\setminus X$ touching $X$ in
    dimension $i$ we define a box $B'$ in $\mathcal{T'}$ by setting
    $B'_i = [-1,B^+_i]$ and $B'_j=B_j$ for $j\neq i$.
\item Any other box $B\in \mathcal{T}\setminus X$ is contained in
  $\mathcal{T'}$. In this context this box is denoted $B'$.
\end{itemize}
\begin{claim}
  $\mathcal{T'}$ is a proper $d$-tiling.
\end{claim}
Every box $B'\in \mathcal{T'}$ contains the corresponding box $B\in
\mathcal{T}\setminus X$ so if there is a point $p'\in [-1,+1]^d$ not
covered by $\mathcal{T'}$, it is a point of $X$. But by construction
this would imply that the point $p$, defined by $p_i= X^+_i+\epsilon$
and $p_j=p'_j$ for any $j\neq i$, is not covered by $\mathcal{T}$, a
contradiction. One can similarly prove that the boxes of
$\mathcal{T'}$ are interior disjoint. $\mathcal{T'}$ is thus a
$d$-tiling. It remains to prove that it is a proper one. Towards a
contradiction, assume that there exist two intersecting boxes $A', B'
\in \mathcal{T'}$ that touch in at least two dimensions. As
$A\subseteq A'$ and $B\subseteq B'$ the boxes $A$ and $B$ do not
intersect. This implies that one of these boxes, say $A$, touches $X$
in dimension $i$ while the other, $B$, touches $X$ in a dimension
$j\neq i$. This implies that $A'$ and $B'$ touch in dimension $j$ and
in another dimension $k\in\{1,\ldots,d\}\setminus \{i,j\}$. We thus
either have that $A^+_k=A'^+_k = B'^-_k = B^-_k$ or that
$B^+_k=B'^+_k=A'^-_k = A^-_k$. Whatever the case we denote $x$ this
value, and as both $A_k$ and $B_k$ intersect $X_k$ in more than one
point, we have that $-1<x<A^+_k$. As $\mathcal{T}$ is in general
position, it admits a separation $S$ such that $S_k=[x,x]$ containing
a point in the interior of $S(X,i,+)$ (as it is bordered by $A$) and a
point in the interior of $S(X,j,+)$ (as it is bordered by $B$). As $S$
is a box it thus contains a point in $X$'s interior, a
contradiction.

As any box $B\in \mathcal{T} \setminus X$ is contained in the
corresponding box $B'\in \mathcal{T'}$, for any arc $AB\in \GT{}
\setminus X$ we have that $B'^-_i \le B^-_i < A^+_i = A'^+_i$ for all
$i\in\{1,\ldots,d\}$, and thus there is an arc $A'B'\in \GTp{}$.
This concludes the proof of the lemma.
\hfill\qed
\end{proof}

\begin{theorem}~\label{thm:main}
Given a proper $d$-tiling $\mathcal{T}$, we have that
$\dimdm(\mathcal{S}(\mathcal{T}))\leq d+1$.
\end{theorem}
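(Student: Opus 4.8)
The plan is to exhibit an explicit realizer of size $d+1$ and verify its defining property. First I would reduce the work: since the realizer condition is inherited by subfaces (if some order places a maximal face $F^{\max}\supseteq F$ below $u$, it places $F$ below $u$ as well), it suffices to check the condition for maximal faces. Fix a maximal face $F$ with $k=|F|$ and let $I=\bigcap_{B\in F}B$, which is a nonempty box by the Helly property. For a vertex $u\notin F$, maximality together with Helly forces $u\cap I=\emptyset$ (otherwise a common point of $u$ and $I$ would make $F\cup\{u\}$ a strictly larger face); for $u\in F$ we simply have $I\subseteq u$. So the two statements to prove are: (a) if $u\notin F$, some order places $u$ above all of $F$; and (b) every $u\in F$ is the maximum of $F$ in some order.

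For the orders I would set $<_{d+1}$ to be any linear extension of the digraph $\GT{}$, which is acyclic by Lemma~\ref{lem:acyclic_2}, oriented so that the head of each arc is the larger element; thus the ``most dominated'' box of any subset is its $<_{d+1}$-maximum. For each $i\in\{1,\dots,d\}$, let $<_i$ order boxes by the lower endpoint $B_i^-$ (larger endpoint $=$ larger box), breaking ties according to $<_{d+1}$. The key elementary remark is that $\max_{B\in F}B_i^-=I_i^-$, so $u>_i B$ for all $B\in F$ as soon as $u_i^->I_i^-$.

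Handling (a) is the clean part. If $u\cap I=\emptyset$ and moreover $u_i^->I_i^-$ for some $i$, then $<_i$ already places $u$ above all of $F$. Otherwise $u_i^-\le I_i^-$ for \emph{every} $i$, and since $I\subseteq B$ gives $I_i^-\le B_i^+$, we get $u_i^-\le B_i^+$ for every $i$ and every $B\in F$; by the arc criterion this says that every $B\in F$ dominates $u$ in $\GT{}$, so $u$ lies above all of $F$ in $<_{d+1}$. The strict inequalities needed for the arcs hold automatically in the nondegenerate dimensions of $I$ and in the dimension witnessing $u\cap I=\emptyset$; the remaining boundary equalities are non-generic and can be absorbed by first passing to general position via Lemma~\ref{lem:gen_position}, which preserves the touching structure and hence $\mathcal{S}(\mathcal{T})$ up to isomorphism.

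The hard part will be (b). By Theorem~\ref{thm:prop=d+1-k} we have $\dim I=d+1-k$, so $F$ has exactly $k-1$ \emph{degenerate} dimensions, in which its boxes merely touch, and locally around $I$ the $k$ boxes realize a Graham--Pollak-tight touching configuration in the sense of Lemma~\ref{lemma-d-1-proper}. The aim is to produce a system of distinct representatives assigning to each box of $F$ an order in which it is the maximum of $F$: the $<_{d+1}$-maximum of $F$ is served by $<_{d+1}$, while each remaining box should be the strict $<_i$-maximum of $F$ for a distinct degenerate dimension $i$ (a box attains $B_i^-=I_i^-$ exactly when it lies on the ``plus side'' in dimension $i$, and the tie-break by $<_{d+1}$ is precisely what disambiguates the boxes sharing a plus side, as one checks already in small cases). \textbf{The main obstacle} is to show that these assignments can be realized injectively, and simultaneously for \emph{all} maximal faces, from a single global choice of tie-break; this is a Hall/SDR condition that I would derive from properness and general position, and it is exactly where the strong intersection formula of Theorem~\ref{thm:prop=d+1-k} and the structure of tight configurations from Lemma~\ref{lemma-d-1-proper} are indispensable. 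Once (a) and (b) are established, the $d+1$ orders form a realizer, yielding $\dimdm(\mathcal{S}(\mathcal{T}))\le d+1$.
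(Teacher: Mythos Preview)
Your realizer is exactly the paper's, and the easy subcase of (a) (some $u_i^->I_i^-$) matches. But there is a real gap at (b), and the boundary subcase of your (a) is not closed either: your appeal to Lemma~\ref{lem:gen_position} does not obviously force the strict inequalities you need, since when $u_i^-=I_i^-=I_i^+=B_i^+$ in a degenerate coordinate the two relevant sides may well lie on the \emph{same} separation, which general position permits.

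More importantly, your Hall/SDR plan for (b) is both unproved and mis-framed. The worry about a ``single global choice of tie-break'' is a red herring: \emph{any} linear extension of $\GT{}$ works, and the verification is purely local, one pair $(F,u)$ at a time, with no coordination across faces. The paper replaces your SDR detour by a direct contradiction that simultaneously handles (b) and the boundary part of (a). Fix $p\in I$ and split on the position of $B_i^-$ relative to $p_i$; the only nontrivial case has $B_i^-=p_i$ for $i$ in some nonempty set $I$ and $B_i^-<p_i$ otherwise. Assuming $\mathcal A(p)\not\le_i B$ for every $i$, each $i\in I$ yields a witness $A\in\mathcal A(p)$ with $A_i^-=p_i$ and, via the tie-break, $B<_{d+1}A$; since $A\not<_{d+1}B$, the arc criterion of $\GT{}$ forces some $j$ with $A_j^+\le B_j^-$, hence $A_j^+=p_j=B_j^-$ and $j\in I$. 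Recording $i\to j$ gives a digraph on $I$ with every out-degree positive, hence a shortest circuit of length $k{+}1$; its $k{+}1$ witnesses are pairwise distinct (by minimality of the circuit) and their common intersection is degenerate in all $k{+}1$ circuit coordinates, so has dimension at most $d-k-1$, contradicting Theorem~\ref{thm:prop=d+1-k}. That single paragraph is what your proposal is missing.
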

\begin{proof}
Consider the orders $(<_1,\ldots,<_{d+1})$ defined as follows. If two
distinct boxes $A, B\in \mathcal{T}$ are such that $B_i^- < A_i^+$ for
all $1\le i\le d$, then $A <_{d+1} B$. By Lemma~\ref{lem:acyclic_2},
the transitive closure of this relation is antisymmetric. In the
following let $<_{d+1}$ be any of its linear extensions.  For $<_i$
with $1\le i\le d$, given two boxes $A, B\in \mathcal{T}$, $A <_{i} B$
if and only if $A_i^- < B_i^-$, or if $A_i^- = B_i^-$ and $A <_{d+1}
B$. Those relations are clearly total orders.

Let us now prove that $\{ <_1,\ldots,<_{d+1} \}$ is a realizer of
$\mathcal{S}(\mathcal{T})$. To do so, we prove that for any point $p$
and any box $B\in \mathcal{T}$, that the set $\mathcal{A}(p)$ of boxes
containing $p$ is dominated by $B$ in some order $<_i$. By extension
of notation, in such case we say that $\mathcal{A}(p) <_i B$.
Note that $B$ verifies one of the following cases:
\begin{itemize}
\item[(1)] $p_i < B_i^-$ for some $1\le i\le d$,
\item[(2)] $B_i^- < p_i$ for all $1\le i\le d$, or
\item[(3)] $B_i^- = p_i$ for all $i\in I$, for some non-empty set
  $I\subseteq \{1,\ldots, d\}$, and $B_i^- < p_i$, otherwise.
\end{itemize}

In case (1), as $A_i^- \le p_i < B_i^-$ for all $A\in\mathcal{A}(p)$,
then $A <_i B$ for all $A\in \mathcal{A}(p)$, that is $\mathcal{A}(p)
<_i B$.

In case (2), as $B^-_i < p_i \le A_i^+$ for all $A\in\mathcal{A}(p)$
and all $1\le i\le d$, then $\mathcal{A}(p) <_{d+1} B$.

Case (3) is more intricate. Towards a contradiction we suppose that
$\mathcal{A}(p) \not\leq_i B$ for all $1\le i\le d+1$. Let $I\subseteq
\{1,\ldots, d\}$ be the non-empty set such that $B_i^- = p_i$ if and
only if $i\in I$.  Consider the directed graph $D$ with vertex set $I$
and which contains an arc $(i,j)$ if and only if there exists a box
$A\in \mathcal{A}(p)$ such that $A_i^-=p_i$ and $A_j^+=p_j$. Note that
as every box is $d$-dimensional $D$ has no loop $(i,i)$.
\begin{claim}
Every vertex in $D$ has at least one outgoing arc.
\end{claim}
For any $i\in I$, as $\mathcal{A}(p) \not\leq_i B$, there exists at
least one box $A \in \mathcal{A}(p)$ such that $B <_i A$. By
definition of $<_i$ this box is such that $A_i^-=p_i$ (as $p_i = B_i^-
\le A_i^- \le p_i$) and such that $B <_{d+1} A$. As $A \not<_{d+1} B$
there exists a $j\in \{1,\ldots,d\}$ such that $A_j^+ \le B_j^-$, but
as $B_j^- \le p_j \le A_j^+$, we have $A_j^+ = p_j = B_j^-$. Thus
$j\in I$ and $D$ has an arc $(i,j)$.

Thus $D$ is not acyclic and we can consider a circuit of minimum
length $C=(i_0,\ldots,i_k)$ in $D$, with $k\ge 1$.  For every $j\in
\{0,\ldots , k\}$ let $A{(j)}$ be a box of $\mathcal{A}(p)$ such that
$A(j)^-_{i_j}=p_{i_j}$ and $A(j)^+_{i_{j+1}}=p_{i_{j+1}}$ (where
$j+1$ is understood modulo $k+1$).
\begin{claim}
All the $k+1$ boxes $A{(j)}$ are distinct.
\end{claim}
Towards a contradiction, consider there
exists two distinct values $j$ and $j'$ such that $A{(j)}$ and $A{(j')}$
are identical. Let us call $A$ this box. By definition of $A{(j)}$
and $A{(j')}$, this box $A$ is such that $A_{i_j}^-=p_{i_j}$
$A_{i_{j'+1}}^+=p_{i_{j'+1}}$. Thus there is an arc from $i_j$ to
$i_{j'+1}$, contradicting the minimality of $C$.

Note that the intersection of these $k+1$ boxes is degenerate in
dimensions $i_j$ for $0\le j \le k$. Thus these $k+1$ boxes intersect
in a box of dimension at most $d-k-1$. This contradicts
Theorem~\ref{thm:prop=d+1-k} and concludes the proof of the theorem.
\hfill\qed
\end{proof}

\section{Conclusion}\label{sec:ccl}

It would be appreciable to deal with contact system of boxes
instead of $d$-tilings, that is to deal with sets of interior disjoint
$d$-dimensional boxes not necessarily spanning $[-1,+1]^d$ or
$\mathbb{R}^d$. For this purpose, we conjecture the following.
\begin{conjecture}
  A set $\mathcal{C}$ of $d$-dimensional boxes in $[-1,+1]^d$ is a
  subset of a proper $d$-tiling $\mathcal{T}$ if and only if every set
  $\mathcal{B}\subset \mathcal{C}$ of pairwise intersecting boxes is
  such that $\dim(\cap_{B\in \mathcal{B}} B)=d+1-|\mathcal{B}|$.
\end{conjecture}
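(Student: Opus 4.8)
The statement is an equivalence, and the two implications are of very unequal difficulty. The ``only if'' direction is immediate from Theorem~\ref{thm:prop=d+1-k}: if $\mathcal{C}\subseteq\mathcal{T}$ for some proper $d$-tiling $\mathcal{T}$, then any pairwise-intersecting subfamily $\mathcal{B}\subseteq\mathcal{C}$ is in particular a pairwise-intersecting subfamily of $\mathcal{T}$, so $\dim(\cap_{B\in\mathcal{B}}B)=d+1-|\mathcal{B}|$ already holds. Hence all the content lies in the ``if'' direction, which is an extension theorem: a contact system $\mathcal{C}$ satisfying the dimension condition must be completable to a proper $d$-tiling. Note that the condition applied to $|\mathcal{B}|=1$ forces every box of $\mathcal{C}$ to be full-dimensional, and applied to $|\mathcal{B}|=2$ it forces any two intersecting boxes to be interior disjoint and to touch in exactly one dimension, so $\mathcal{C}$ is genuinely a ``proper partial tiling''.

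For the ``if'' direction the plan is to induct on $d$, slicing $[-1,+1]^d$ into slabs along dimension $d$. The base case $d=1$ is trivial (disjoint intervals always complete to a tiling of $[-1,+1]$ with at most two intervals per point). For the step, let $z_0=-1<z_1<\cdots<z_m=+1$ list the distinct values of the endpoints $C_d^-,C_d^+$ over all $C\in\mathcal{C}$. On each open slab between consecutive heights, every box of $\mathcal{C}$ meeting the slab restricts to a prism, and the cross-sections $\mathcal{C}_k=\{C|^{(d)}_x\ :\ C\in\mathcal{C},\ (z_k,z_{k+1})\subseteq C_d\}$ (for $x$ in the open slab) form a partial arrangement in $[-1,+1]^{d-1}$. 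A short computation shows $\mathcal{C}_k$ satisfies the $(d-1)$-dimensional condition: any pairwise-intersecting subfamily of $\mathcal{C}_k$ lifts to a pairwise-intersecting subfamily of $\mathcal{C}$ whose total intersection contains the slab in dimension $d$, so removing that one non-degenerate dimension drops $\dim$ by exactly one. By the induction hypothesis $\mathcal{C}_k$ extends to a proper $(d-1)$-tiling $\mathcal{T}_k$, and the prisms $\{T\times[z_k,z_{k+1}]:T\in\mathcal{T}_k\}$ form a proper $d$-tiling of the closed slab that contains all boxes of $\mathcal{C}$ crossing it.

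The hard part will be gluing these slab completions across the separating hyperplanes $\mathcal{H}^{(d)}_{z_k}$, and this is the main obstacle. The completions $\mathcal{T}_{k-1}$ and $\mathcal{T}_k$ are produced independently, so they need not be compatible: a point $p\in\mathcal{H}^{(d)}_{z_k}$ may lie on the top faces of up to $d+1$ boxes coming from $\mathcal{T}_{k-1}$ and simultaneously on the bottom faces of up to $d+1$ boxes from $\mathcal{T}_k$, with only the crossing boxes of $\mathcal{C}$ shared between the two sides; the union can then place far more than $d+1$ boxes at $p$, destroying properness. Equivalently (recalling that by Lemma~\ref{lem:separation_box} the faces perpendicular to dimension $d$ are boxes), the horizontal separations created when completing one slab need not match those created for the adjacent slab.

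To resolve this I would strengthen the induction hypothesis to a completion-with-boundary-data statement: a proper partial arrangement inside a slab $[-1,+1]^{d-1}\times[a,b]$, together with a prescribed $(d-1)$-dimensional proper footprint on each of the two bounding faces $\mathcal{H}^{(d)}_a$ and $\mathcal{H}^{(d)}_b$ recording which boxes abut that face, extends to a proper $d$-tiling of the slab realizing the prescribed boundary data. One then sweeps the slabs in increasing order of height, feeding the top-face data produced for one slab as the bottom-face constraint of the next, so that across each shared hyperplane the two sides meet exactly along common separations and the number of boxes at any point is controlled by the single $(d-1)$-dimensional proper tiling living in that hyperplane, hence at most $d+1$ by Theorem~\ref{thm:prop=d+1-k}. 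Proving this boundary-constrained extension — existence of a compatible completion and consistent propagation of the face data along the sweep — is where the real work lies, and I expect it to require the separation machinery of Section~\ref{sec:sep} together with a perturbation as in Lemma~\ref{lem:gen_position} to keep the newly introduced faces generic. As an alternative to the sweep one can attempt a greedy completion that repeatedly inserts a single maximal box anchored at a locally lowest corner of the uncovered region while maintaining the dimension condition; there the obstacle merely shifts to proving that such a box always exists and that finitely many insertions suffice.
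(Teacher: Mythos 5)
You should first be aware that this statement is not a theorem of the paper: it is stated in Section~\ref{sec:ccl} as an open conjecture, so there is no proof of the authors' to compare against, and your attempt must stand entirely on its own. It does not. Your ``only if'' direction is correct and immediate from Theorem~\ref{thm:prop=d+1-k}, and your slab reduction is sound as far as it goes (the cross-section families $\mathcal{C}_k$ do inherit the $(d-1)$-dimensional condition, and their members are distinct by the $|\mathcal{B}|=2$ case). But everything after that is a program rather than a proof: as you yourself concede, the entire content of the conjecture is concentrated in the gluing step, and the ``boundary-constrained extension'' statement you propose as a strengthened induction hypothesis is essentially the conjecture itself in relative form --- a prescribed proper footprint on a bounding face is a family of constraints of exactly the same type as the boxes of $\mathcal{C}$ abutting that face, so proving that it can always be realized is at least as hard as the original extension problem. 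Neither the separation machinery of Section~\ref{sec:sep} nor the perturbation of Lemma~\ref{lem:gen_position} is an existence tool; they analyze tilings that are already given.

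Two concrete steps would fail as written. First, your claim that with matching face data the boxes at a point of the interface hyperplane are ``controlled by the single $(d-1)$-dimensional proper tiling living in that hyperplane, hence at most $d+1$'' is unjustified: the hyperplanes $\mathcal{H}^{(d)}_{z_k}$ are precisely the non-generic ones, so Lemmas~\ref{lem:slice-tiling} and~\ref{lem:slice-tiling-proper} do not apply there, and even if the two slab tilings induce \emph{identical} footprints on the hyperplane, a point lying on the common boundary of $k\le d$ footprint boxes lies in up to $2k\le 2d$ boxes of the union unless abutting pairs are merged --- and merging is forbidden exactly for boxes of $\mathcal{C}$ whose top or bottom side lies in that hyperplane. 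Properness at the interface therefore requires a genuinely proper arrangement of separations there (cf.\ Lemma~\ref{lem:separation_box}), not merely equal footprints. Second, the conjecture demands $\mathcal{C}\subseteq\mathcal{T}$ with the boxes of $\mathcal{C}$ appearing \emph{unsubdivided}, whereas your slabwise tilings contain only their restrictions; the sweep must additionally force the pieces of each crossing box to have literally identical cross-sections in consecutive slabs so they can be re-merged, a compatibility constraint your induction hypothesis does not secure, and re-merging can itself create pairs touching in two dimensions. Your greedy alternative has the same status: proving that an insertable box maintaining the condition always exists \emph{is} the open problem. In short: correct easy direction, reasonable first reduction, but the statement remains unproven --- consistent with the paper leaving it as a conjecture.
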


Thomassen~\cite{Tho} (see also~\cite{Fusy}) characterized the
intersection graphs of proper $2$-tilings, exactly as the strict
subgraphs of the 4-connected planar triangulations. The 4-connected
planar triangulations are those where every triangle bounds a face. A
simplicial complex has the \emph{Helly property} if every clique in
its skeleton is a face of the simplicial complex. As the simplicial
complexes defined by $d$-tilings have the Helly property, we conjecture
the following:

\begin{conjecture}
  A simplicial complex $\mathcal{S}$ is such that $\mathcal{S} =
  \mathcal{S}(\mathcal{T}\cup\mathcal{T}_{ext})$ for some proper
  $d$-tiling $\mathcal{T}$, if and only if $\mathcal{S}$ is a
  triangulation of the $d$-dimensional octahedron with the Helly
  property and with Dushnik-Miller dimension $d+1$.
\end{conjecture}
  
Similarly, is it possible to generalize the fact that bipartite planar
graphs are the intersection graphs of non-intersecting and axis
parallel segments in the plane~\cite{FPP} ?

\subsubsection*{Acknowledgments}
The second author thanks \'E. Fusy for explaining him the construction
in~\cite{Z10} defining a Schnyder wood from a 2-tiling. This was
crucial for obtaining this result.

\bibliographystyle{abbrv}

\begin{thebibliography}{10}

\bibitem{AA10}
  A. Asinowski and T. Mansour,
  \newblock Separable d-Permutations and Guillotine Partitions.
  \newblock {\em Annals of Combinatorics} 14(1): 17--43, 2010.

\bibitem{EuroCG}
W. Evans, S. Felsner, S.G. Kobourov, and T. Ueckerdt,
\newblock  Graphs admitting $d$-realizers: spanning-tree-decompositions and box-representations.
\newblock {\em Proc. of EuroCG '14}. 

\bibitem{FF}
S. Felsner and M.C. Francis,
\newblock Contact representations of planar graphs with cubes.
\newblock {\em Proc. of SoCG '11}, 315--320, 2011. 

\bibitem{FK}
S. Felsner and S. Kappes,
\newblock Orthogonal Surfaces and Their CP-Orders.
\newblock {\em Order} 25: 19-47, 2008. 

\bibitem{FPP}
H. de Fraysseix, J. Pach, and R. Pollack,
\newblock How to draw a planar graph on a grid.
\newblock {\em Combinatorica} 10(1): 41--51, 1990.

\bibitem{Fusy}
\'E. Fusy,
\newblock Transversal structures on triangulations: A combinatorial study and straight-line drawings.
\newblock {\em Discrete Math.} 309(7), 1870--1894, 2009.

\bibitem{GI17}
  D. Gon\c{c}alves, and L. Isenmann,
  \newblock Dushnik-Miller dimension of TD-Delaunay complexes.
  \newblock \emph{EuroGC '17}, 2017.
  
\bibitem{Graham-Pollak}
R.L. Graham, and H.O. Pollak,
\newblock On embedding graphs in squashed cubes.
\newblock {\em Graph theory and applications}, Lecture Notes in Math. 303: 99--110, 1972.

\bibitem{Oss}
P. Ossona de Mendez,
\newblock Geometric realization of simplicial complexes.
\newblock Proc. of Graph Drawing '99, {\em LNCS} 1731: 323--332, 1999. 

\bibitem{S89}
W. Schnyder,
\newblock Planar graphs and poset dimension.
\newblock {\em Order} 5: 323--343, 1989. 

\bibitem{Tho} 
C. Thomassen, 
\newblock Plane representations of graphs.
\newblock Progress in graph theory (Bondy and Murty, eds.), 336--342, 1984.

\bibitem{Tverberg}
H.~Tverberg,
\newblock On the decomposition of $K_n$ into complete bipartite graphs.
\newblock {\em J.  Graph Theory}, 6: 493--494, 1982.

\bibitem{Z85}
J.~Zaks,
\newblock How Does a Complete Graph Split into Bipartite Graphs and How are Neighborly Cubes Arranged ?
\newblock {\em Amer. Math. Monthly}, 92(8): 568--571, 1985.

\bibitem{Z10}
  H.~Zhang,
  \newblock Planar Polyline Drawings via Graph Transformations.
  \newblock \emph{Algorithmica}, 57: 381--397, 2010.

\end{thebibliography}


\end{document}